\newtheorem{theorem}{\bf Theorem}[section]
\newtheorem{lemma}{\bf Lemma}[section]
\newtheorem{proposition}{Proposition}
\newcommand{\OO}{{\mathcal O}}
\newcommand{\ETH}{\textsf{ETH} }
\newcommand{\Yes}{{\sc Yes}}
\newcommand{\yes}{{\sc Yes}}
\newcommand{\No}{{\sc No}}
\newcommand{\FPT}{\textrm{\textup{FPT}}}
\newcommand{\NP}{\textrm{\textup{NP}}}
\newcommand{\NPH}{\textrm{\textup{NP-hard}}}
\newcommand{\II}{\mathcal{I}}
\newcommand{\cO}{{\cal O}}
\newcommand{\mrank}[1]{\mbox{\sf rank}(#1)}
\newcommand{\matparity}{\textsc{$\alpha$-Matroid Parity}\xspace}
\newcommand{\maxsimfesfull}{\textsc{Maximum Simultaneous Acyclic Subgraph}\xspace}
\newcommand{\maxsimfes}{\textsc{Max-Sim-Subgraph}\xspace}
\newcommand{\minsimfes}{\textsc{Sim-FES}\xspace}
\newcommand{\simfesfull}{\textsc{Simultaneous Feedback Edge Set}}
\newcommand{\simfes}{\textsc{Sim-FES}}
\newcommand{\colfn}{{${\sf col}: E(G) \rightarrow 2^{[\alpha]}$}}
\newcommand{\coln}{{{\sf col}}}
\newcommand{\mat}{$M=(E,{\cal I})$}
\newcommand{\matl}[1]{$M_{#1}=(E_{#1},{\cal I}_{#1})$}
 \newcommand{\defparproblem}[4]{
  \vspace{3mm}
 \noindent\fbox{
   \begin{minipage}{.95\textwidth}
   \begin{tabular*}{\textwidth}{@{\extracolsep{\fill}}lr} \textsc{#1}  \\ \end{tabular*}
   {\bf{Input:}} #2  \\
    {\bf{Parameter:}} #3 \\
   {\bf{Question:}} #4
   \end{minipage}
   }
   \vspace{2mm}
 }
\begin{document}

\title{Simultaneous Feedback Edge Set: A Parameterized Perspective\thanks{A preliminary version of this paper will appear in the proceedings of the 27th International Symposium
Algorithms and Computation (ISAAC 2016).
The research leading to these results has received funding from the European Research Council (ERC)
via grants Rigorous Theory of Preprocessing, reference 267959 and PARAPPROX, reference 306992.
}}

%\author{Akanksha Agrawal\thanks{Department of Informatics, University of Bergen, Norway. Email: {\tt akanksha.agrawal@ii.uib.no}.}
%\and
%Fahad Panolan\thanks{Department of Informatics, University of Bergen, Norway. Email: {\tt fahad.panolan@ii.uib.no}.}
%\and Saket Saurabh\thanks{Department of Informatics, University of Bergen, Norway.
%The Institute of Mathematical Science, HBNI, India. Email: {\tt saket@imsc.res.in}.}
%\and Meirav Zehavi\thanks{Department of Informatics, University of Bergen, Norway. Email: {\tt meirav.zehavi@ii.uib.no}.}} 

\author[1]{ Akanksha Agrawal}
\author[1]{Fahad Panolan}
\author[1,2]{Saket Saurabh}
\author[1]{Meirav Zehavi}
%\author[2]{Author E\thanks{E.E@university.edu}}
\affil[1]{Department of Informatics, University of Bergen, Norway. \texttt{\{akanksha.agrawal|fahad.panolan|meirav.zehavi\}@ii.uib.no}}
\affil[2]{The Institute of Mathematical Sciences, HBNI, Chennai, India. \texttt{saket@imsc.res.in}}

\date{}

% \authorrunning{A. Agrawal, F. Panolan, S. Saurabh and M. Zehavi}
% \Copyright{Akanksha Agrawal, Fahad Panolan, Saket Saurabh and Meirav Zehavi}
% \subjclass{G.2.2 Graph Algorithms, I.1.2 Analysis of Algorithms}
% \keywords{parameterized complexity, feedback edge set, $\alpha$-matroid parity}

\maketitle

\begin{abstract}
In a recent article Agrawal et al.~(STACS 2016) studied a simultaneous variant of the classic {\sc Feedback Vertex Set} problem, called \textsc{Simultaneous Feedback Vertex Set (Sim-FVS)}. In this problem the input is an $n$-vertex graph $G$, an integer $k$ and a coloring function ${\sf col}: E(G) \rightarrow 2^{[\alpha]}$, and the objective is to check whether there exists  a vertex subset $S$ of cardinality at most $k$ in $G$ such that for all $i\in [\alpha]$, $G_i - S$ is acyclic. Here, $G_i=(V(G), \{e\in E(G) \mid i \in {\sf col}(e)\})$ and $[\alpha]=\{1,\ldots,\alpha\}$. In this paper we consider the edge variant of the problem, namely,  \simfesfull\  (\simfes). In this problem, the input is same as the input of \textsc{Sim-FVS} and the objective is to check whether there is an edge subset $S$ of cardinality at most $k$ in $G$ such that for all $i \in [\alpha]$, $G_i - S$ is acyclic. Unlike the vertex variant of the problem, when $\alpha =1$, the problem is 
equivalent to finding a maximal spanning forest and hence it is polynomial time solvable. We show that for $\alpha =3$ \simfes{} is \NP-hard by giving a reduction from \textsc{Vertex Cover} on cubic-graphs. The same reduction shows that the problem does not admit an algorithm  of running time $\OO(2^{o(k)}n^{\OO(1)})$  unless \ETH fails. This hardness result  is complimented by an \FPT{} algorithm 
for \simfes{} running in time $\OO(2^{\omega k \alpha+\alpha \log k} n^{\OO(1)})$, where 
$\omega$ is the exponent in the running time of matrix multiplication. The same algorithm gives a polynomial time algorithm for the case when $\alpha =2$. 
We also give a kernel for \simfes{} with $(k\alpha)^{\mathcal{O}(\alpha)}$ vertices. Finally, we consider the problem \maxsimfesfull. Here, the input is a graph $G$, an integer $q$ and, a coloring function ${\sf col}: E(G) \rightarrow 2^{[\alpha]}$. The question is whether there is a edge subset $F$ of cardinality at least $q$ in $G$ such that for all $i\in [\alpha]$, $G[F_i]$ is acyclic. Here, $F_i=\{e \in F \mid i \in \textsf{col}(e)\}$. We give an \FPT{} algorithm for \maxsimfesfull{} running in time $\OO(2^{\omega q \alpha}n^{\mathcal{O}(1)})$. All our algorithms are based on parameterized version of the  {\sc Matroid Parity} problem. 
\end{abstract}

%\begin{keywords}
%Parameterized Complexity, Feedback edge set, $\alpha$-matroid parity
%\end{keywords}
%
%\begin{AMS}
%05C85, 68Q25, 68W05
%\end{AMS}

%\pagestyle{myheadings}
%\thispagestyle{plain}
%\markboth{A. AGRAWAL, P. PANOLAN, S. SAURABH and M. ZEHAVI}{SIMULTANEOUS FEEDBACK EDGE SET}

\section{Introduction}\label{sec:intro}
Deleting  at most $k$ vertices or edges from  a given graph $G$, so that the resulting graph belongs to a particular family of graphs ($\cal F$), is an important research direction in the fields of 
graph algorithms and parameterized complexity.  For a family of graphs $\mathcal{F}$, given a graph $G$ and an integer $k$, the \textsc{$\mathcal{F}$-deletion} (\textsc{Edge $\mathcal{F}$-deletion}) problem asks whether we can delete at most $k$ vertices (edges) in $G$ so that the resulting graph belongs to $\mathcal{F}$. The \textsc{$\mathcal{F}$-deletion} (\textsc{Edge $\mathcal{F}$-deletion}) problems generalize many of the NP-hard problems like \textsc{Vertex Cover}, \textsc{Feedback vertex set}, \textsc{Odd cycle transversal}, {\sc Edge Bipartization}, etc.
Inspired by  applications, Cai and Ye introduced  variants of 
\textsc{$\mathcal{F}$-deletion} (\textsc{Edge $\mathcal{F}$-deletion}) problems on edge colored graph~\cite{caiye2014}. Edge colored graphs are studied in graph theory with respect to various problems like \textsc{Monochromatic and Heterochromatic Subgraphs}~\cite{kano-monochromatic}, \textsc{Alternating paths}~\cite{jensen-alternating,chou-paths,manoussakis-alternating}, Homomorphism in edge-colored graphs~\cite{alon-homorphism}, \textsc{Graph Partitioning} in 2-edge colored graphs~\cite{balogh-partitioning} etc. One of the natural generalization to the classic \textsc{$\mathcal{F}$-deletion}  (\textsc{Edge $\mathcal{F}$-deletion})  problems on edge colored graphs is the following. Given a graph $G$ with a coloring function ${\sf col}: E(G) \rightarrow 2^{[\alpha]}$, and an integer $k$, we want to delete a set $S$ of at most $k$ edges/vertices in $G$ so that for each $i \in [\alpha]$, $G_i - S$ belongs to $\mathcal{F}$. Here, $G_i$ is the graph with vertex set $V(G)$ and edge set as $\{e\in E(G) \mid i \in {\sf col}(e)\}$. These problems are also called {\em simultaneous} variant of \textsc{$\mathcal{F}$-deletion} (\textsc{Edge $\mathcal{F}$-deletion}). 

Cai and Ye studied the \textsc{Dually Connected Induced subgraph} and \textsc{Dual Separator} on 2-edge colored graphs~\cite{caiye2014}. Agrawal et al.~\cite{Sim-FVS} studied a 
simultaneous variant of {\sc Feedback Vertex Set} problem, called \textsc{Simultaneous Feedback Vertex Set}, in the realm of parameterized complexity.  Here, the input is a graph $G$, an integer $k$, and a coloring function ${\sf col}: E(G) \rightarrow 2^{[\alpha]}$ and the objective is to check whether there is  a set $S$ of at most $k$ vertices in $G$ such that for all $i\in [\alpha]$, $G_i - S$ is acyclic. Here, $G_i=(V(G), \{e\in E(G) \mid i \in {\sf col}(e)\})$. In this paper we consider the edge variant of the problem, namely,  \simfesfull{}, in the realm of parameterized complexity.

 In the Parameterized Complexity paradigm the main objective is to design an algorithm with running   time $f(\mu)\cdot n^{\OO(1)}$, where $\mu$ is the parameter associated with the input, $n$ is the size of the input and $f(\cdot)$ is some computable function whose value depends only on $\mu$. A problem which admits such an algorithm is said to be \emph{fixed parameter tractable} parameterized by $\mu$. Typically, for edge/vertex deletion problems one of the natural parameter that is associated with the input is the size of the solution we are looking for. Another objective in parameterized complexity is to design polynomial time pre-processing routines that reduces the size of the input as much as possible. The notion of such a pre-processing routine is captured  by \emph{kernelization} algorithms. The kernelization algorithm for a parameterized problem $Q$ takes as input an instance $(I,k)$ of $Q$, runs in polynomial time and returns an equivalent instance $(I',k')$ of $Q$. Moreover, the size of the instance $(I',k')$ returned by the kernelization algorithm is bounded by $g(k)$, where $g(\cdot)$ is some computable function whose value depends only on $k$. If $g(\cdot)$ is polynomial in $k$, then the problem $Q$ is said to admit a polynomial kernel. The instance returned by the kernelization is referred to as a \emph{kernel} or a reduced instance.  We refer the readers to the recent book of Cygan et al.~\cite{saket-book} for a more detailed overview of parameterized complexity and kernelization. 

A \emph{feedback edge set} in a graph $G$ is $S\subseteq E(G)$ such that $G- S$ is a forest. For a graph $G$ with a coloring function \colfn,  \emph{simultaneous feedback edge set} is a subset $S \subseteq E(G)$ such that $G_i -S $ is a forest for all $i \in [\alpha]$. Here, $G_i=(V(G), E_i)$, where $E_i=\{e \in E(G) \mid i \in {\sf col}(e)\}$. Formally, the problem is stated below.

\defparproblem{\simfesfull\  (\simfes)  }{An $n$-vertex graph $G$, $k\in {\mathbb N}$ and a coloring function \colfn}{$k,\alpha$} 
{Is there a simultaneous feedback edge set of cardinality at most $k$ in $G$} 

{\sc Feedback Vertex Set (FVS)} is one of the classic \NP-complete~\cite{GJ79} problems and has been extensively studied from all the algorithmic paradigms  that are meant for coping with \NP-hardness, such as approximation algorithms, parameterized complexity and moderately exponential time algorithms.  
The problem
admits a factor $2$-approximation algorithm~\cite{2-approx-fvs-bafna}, an exact 
algorithm with running time $\OO(1.7217^n  n^{\OO(1)})$~\cite{exactmonotoneFomin}, a deterministic parameterized algorithm of running in time $\OO(3.619^k n^{\OO(1)})$~\cite{Kociumaka2014556}, a randomized algorithm running
in $\OO(3^k n^{\OO(1)} )$ time~\cite{Cygan:2011:SCP:2082752.2082943},
and a kernel with $\OO(k^2)$ vertices~\cite{Thomasse:2010:KKF:1721837.1721848}.  Agrawal et al.~\cite{Sim-FVS} studied  \textsc{Simultaneous Feedback Vertex Set (Sim-FVS)} and gave an \FPT{} algorithm running in time $2^{\OO(\alpha k)}n^{\OO(1)}$ and a kernel of size $\OO(\alpha k^{3(\alpha +1)})$.
Finally, unlike the FVS problem, \simfes{} is polynomial time solvable when $\alpha=1$, because
it  is equivalent to finding maximal spanning forest.

%\subparagraph*{Our results and approach} 
\subsection*{Our results and approach}
In Section~\ref{sec:sim-fes-fpt} we design an \FPT{} algorithm for \minsimfes{} by 
reducing to  \matparity{} on direct sum of elongated co-graphic matroids of $G_i$, $i\in [\alpha]$ 
(see Section~\ref{sec:prelim} for definitions related to matroids). This algorithm runs in time $\OO(2^{\omega k \alpha+\alpha \log k} n^{\OO(1)})$. 
Unlike the vertex counterpart, we show that for $\alpha=2$ (2-edge colored graphs) \simfes{} is polynomial time solvable. This follows from the polynomial time algorithm for the \textsc{Matroid parity} problem. In Section~\ref{sec:hardness} we show that for $\alpha = 3$, \simfes{} is \NP-hard. Towards this, we give a reduction from the \textsc{Vertex Cover} in cubic graphs which is known to be \NP-hard~\cite{mohar2001face}. Furthermore, the same reduction shows that the problem cannot be solved in 
$2^{o(k)} n^{\OO(1)}$ time unless Exponential Time Hypothesis (\textsc{ETH}) fails~\cite{Impagliazzo:2001:PSE:569473.569474}. We complement our \FPT{} algorithms by showing that \simfes{} is W[1]-hard when parameterized by the solution size $k$ (Section~\ref{sec:simfes-lower-bounds}). When $\alpha= \OO(|V(G)|)$, we give a parameter preserving reduction from the \textsc{Hitting Set} problem, a well known W[2]-hard problem parameterized by the solution size~\cite{saket-book}. However, \simfes{} remains W[1]-hard even when $\alpha= \OO(\log (|V(G)|))$. We show this by giving a parameter preserving reduction from \textsc{Partitioned Hitting Set} problem, a variant of the \textsc{Hitting set} 
problem,  defined in~\cite{Sim-FVS}. In \cite{Sim-FVS},   \textsc{Partitioned Hitting Set} was shown to be W[1]-hard parameterized by the solution size. In Section~\ref{sec:kernel} we give a kernel with $\mathcal{O}((k\alpha)^{\mathcal{O}(\alpha)})$ vertices. Towards this we apply some of the standard preprocessing rules for obtaining kernel for  \textsc{Feedback Vertex Set} and use the approach similar to the one developed  for designing kernelization algorithm for \textsc{Sim-FVS}~\cite{Sim-FVS}. In Section~\ref{sec:exactalgo-fes} we give an \FPT\ algorithm for 
the problem, when parameterized by the dual parameter. Formally, this problem is defined as follows.

\defparproblem{\maxsimfesfull\  (\maxsimfes)}
{An $n$-vertex graph $G$, $q\in {\mathbb N}$ and a function \colfn.} {$q$}
{Is there a subset $F \subseteq E(G)$ such that $|F|\geq q$ and for all $i \in [\alpha]$, $G[F \cap E(G_i)]$ is acyclic?}

For solving \maxsimfes we reduce it to an equivalent instance of the \matparity problem. As an immediate corollary we get an exact algorithm for \simfes\ running in time $\cO(2^{\omega n \alpha^2} n^{\cO(1)})$. 

\section{Preliminaries}\label{sec:prelim}
We denote the set of natural numbers by $\mathbb{N}$. For $n\in \mathbb{N}$, by $[n]$ we denote the set $\{1,\ldots,n\}$. For a set $X$, by $2^X$ we denote the set 
of all subsets of $X$. We use the term \emph{ground set}/ \emph{universe} to distinguish a set from its subsets. We will use $\omega$ to denote the exponent in the running time of
matrix multiplication, the current best known bound for $\omega$ is $<2.373$~\cite{Williams12}. 

%\subparagraph*{Graphs} 
\subsection{Graphs}
We use the term {\em graph} to denote undirected graph. 
For a graph $G$, by $V(G)$ and $E(G)$ we denote its vertex set and edge set, respectively. We will be considering finite graphs possibly having loops and multi-edges.
In the following, let $G$ be a graph and let $H$ be a subgraph of $G$. By $d_{H}(v)$, we denote the
degree of the vertex $v$ in $H$, i.e, the number of edges in H which are incident with $v$. 
A self-loop at a vertex $v$ contributes $2$ to the degree of $v$.
For any non-empty subset $W \subseteq V(G)$, the subgraphs of $G$ induced by $W$, $V(G)\setminus W$ are denoted by $G[W]$ and $G-W$ respectively. Similarly, for $F \subseteq E(G)$, the subgraph of $G$ induced by $F$ is denoted by $G[F]$; its vertex set is $V(G)$ and its
edge set is $F$. For $F \subseteq E(G)$, by $G - F$ we denote the graph obtained by
deleting the edges in $F$. We use the convention that a double edge and a self-loop is a cycle.  An $\alpha$-edge colored graph is a graph $G$ with a color function \colfn{}. By $G_i$ we will denote the color $i$ (or $i$-color) graph of $G$, where $V(G_i)=V(G)$ and $E(G_i)=\{e\in E(G) | i \in {\sf col}(e)\}$. For an $\alpha$-edge colored graph $G$, the \emph{total degree} of a vertex
$v$ is $\sum_{i=1}^{\alpha} d_{G_i}(v)$. We refer the reader to~\cite{diestel-book} for details on standard graph theoretic notations and terminologies.

\subsection{Matroids}
A pair \mat, where $E$ is a ground set and $\cal I$ is a family of subsets (called independent sets) of $E$, is a {\em matroid}
if it satisfies the following conditions: 
\begin{itemize}
 \item[\rm (I1)]  $\phi \in \cal I$,
 \item[\rm (I2)]  if $A' \subseteq A $ and $A\in \cal I$ then $A' \in  \cal I$, and 
 \item[\rm (I3)] if $A, B  \in \cal I$  and $
|A| < |B| $, then there is $ e \in  (B \setminus A) $ such that $A\cup\{e\} \in \cal I$. 
\end{itemize}
 
The axiom (I2) is also called the hereditary property and a pair $(E,\cal I)$  satisfying  only (I2) is called hereditary family. An inclusion wise maximal subset of $\cal I$ is
called a {\em basis} of the matroid. Using axiom (I3) it is easy to show that all the bases of a matroid have the same size. This size is called the {\em rank} of the matroid $M$,
and is denoted by \mrank{M}. We refer the reader to~\cite{oxleymatroid} for more details about matroids.

\subparagraph{Representable Matroids} 
% A pair \mat, where $E$ is a ground set and $\cal I$ is a family of subsets (called independent sets) of $E$, is a {\em matroid}
% if it satisfies the following conditions: \rm (I1)  $\phi \in \cal I$. \rm (I2)  If $A' \subseteq A $ and $A\in \cal I$ then $A' \in  \cal I$. \rm (I3) If $A, B  \in \cal I$  and $
% |A| < |B| $, then there is $ e \in  (B \setminus A) $ such that $A\cup\{e\} \in \cal I$. We refer the reader to~\cite{oxleymatroid} for more details. 
%One of the simple matroids is a uniform matroid. A pair \mat{} over an $n$-element ground set $E$, is called a uniform matroid if the family of independent sets  is given by ${\cal I}=\{A\subseteq E~|~|A|\leq k\}$, where $k$ is some constant. This matroid is also denoted as $U_{n,k}$. 
%See the appendix for the basic definiton of matroids. 
Let $A$ be a matrix over an arbitrary field $\mathbb F$ and let $E$ be the set of columns of $A$. For $A$, we define  matroid 
 \mat{} as follows. A set $X \subseteq E$ is independent (that is $X\in \cal I$) if the corresponding columns are linearly independent over $\mathbb F$. The matroids that can be defined by such a construction are called {\em linear matroids}, and if a matroid can be defined by a matrix $A$ over a 
field $\mathbb F$, then we say that the matroid is representable over $\mathbb F$. A matroid \mat{}  is called {\em representable} or {\em linear} if it is representable over some field $\mathbb F$. 

\subparagraph*{Direct Sum of Matroids} 
Let \matl{1}, \matl{2}, \dots, \matl{t} be $t$ matroids with $E_i\cap  E_j =\emptyset$ for all $1\leq i\neq j \leq t$. The direct sum
$M_1\oplus \cdots \oplus M_t$  is a matroid \mat{} with  $E := \bigcup_{i=1}^t E_i$ and $X\subseteq E$ is independent if and only if  $X\cap E_i\in {\cal I}_i$ for all $i\in [t]$. 
Let $A_i$  be the representation matrix of \matl{i} over field $\mathbb{F}$. 
%Then, a representation matrix of $M_1\oplus \cdots \oplus M_t$ over $\mathbb{F}$ can be found in
%polynomial time~\cite{marxmatroid06,oxleymatroid}. 
Then,  
    \begin{displaymath}
    A_M=\left(
     \begin{array}{ccccc}
        A_1 & 0 & 0 & \cdots & 0\\
        0 &  A_2 &  0 & \cdots & 0 \\
        \vdots &   \vdots  &  \vdots   &  \vdots  &  \vdots \\
        0 & 0 & 0 & \cdots & A_t 
     \end{array} \right)
     \end{displaymath}
 is a representation matrix of $M_1\oplus \cdots \oplus M_t$. The correctness of this  is proved in~\cite{marxmatroid06,oxleymatroid}.

\subparagraph*{Uniform Matroid} 
A pair \mat{} over an $n$-element ground set $E$, is called a uniform matroid if the family of independent sets  is given by ${\cal I}=\{A\subseteq E~|~|A|\leq k\}$, where $k$ is some constant. This matroid is also denoted as $U_{n,k}$. 

\begin{proposition}[\cite{saket-book,oxleymatroid}] 
\label{prop:uniformrep}
Uniform matroid $U_{n,k}$ is representable over any field of size strictly more than $n$ and such a representation can be found in time polynomial in $n$.
\end{proposition}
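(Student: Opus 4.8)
The plan is to exhibit an explicit $k \times n$ matrix over $\mathbb{F}$ whose columns realize exactly the independent sets of $U_{n,k}$, and then observe that it can be written down efficiently. Recall that the rank of $U_{n,k}$ is $k$ (I assume $k \le n$; if $k > n$ the matroid is free and the $n \times n$ identity matrix works), and a set $X$ is independent precisely when $|X| \le k$. In the language of representable matroids, I therefore need a $k \times n$ matrix $A$ over $\mathbb{F}$ with the property that every choice of $k$ of its columns is linearly independent, equivalently that every $k \times k$ submatrix obtained by selecting $k$ columns is nonsingular.

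The construction I would use is a Vandermonde matrix. Using the hypothesis $|\mathbb{F}| > n$, pick $n$ pairwise distinct field elements $x_1, \dots, x_n \in \mathbb{F}$, and define $A$ by $A_{ij} = x_j^{\,i-1}$ for $i \in [k]$ and $j \in [n]$; that is, the $j$-th column is $(1, x_j, x_j^2, \dots, x_j^{k-1})^{\mathsf{T}}$. To verify that $A$ represents $U_{n,k}$, fix any $S \subseteq [n]$ with $|S| = k$ and consider the square submatrix $A_S$ formed by the corresponding columns. This is exactly a $k \times k$ Vandermonde matrix in the distinct nodes $\{x_j : j \in S\}$, whose determinant equals $\prod_{i < j,\ i,j \in S}(x_j - x_i)$. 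Since the $x_j$ are pairwise distinct, this product is nonzero, so $A_S$ is nonsingular and the $k$ columns indexed by $S$ are linearly independent. Consequently every set of at most $k$ columns is independent, while no larger set can be, because $A$ has only $k$ rows. Hence the linear matroid of $A$ has as its independent sets exactly the subsets of $[n]$ of size at most $k$, which is precisely $U_{n,k}$.

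For the running time, once a field of size $> n$ is fixed, listing $n$ distinct elements and filling in the $k \times n$ entries $x_j^{\,i-1}$ (computing the powers by repeated multiplication) are both polynomial in $n$. I do not expect a genuine obstacle here: the correctness rests entirely on the classical nonvanishing of the Vandermonde determinant, and the only point requiring care is that the field is large enough to contain $n$ distinct elements, which is exactly what the hypothesis $|\mathbb{F}| > n$ guarantees.
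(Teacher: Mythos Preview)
Your argument is correct: the Vandermonde construction is exactly the standard way to realize $U_{n,k}$, and your verification via the nonvanishing of the Vandermonde determinant is the right one. Note that the paper does not actually prove this proposition; it merely states it with citations to Oxley's \emph{Matroid Theory} and the parameterized-algorithms textbook, where the Vandermonde representation you describe is the proof given. So your proposal matches the intended (cited) argument.
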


\subparagraph*{Graphic and Cographic Matroid} Given a graph $G$, the graphic matroid \mat{} is defined by taking the edge set $E(G)$ as universe and $F\subseteq E(G)$ is in $\cal
I$ if and only if $G[F]$ is a forest.  Let $G$ be a graph and $\eta$ be the number of components in $G$. The co-graphic matroid \mat{} of $G$ is defined by taking the the edge set
$E(G)$ as universe and $F \subseteq E(G)$ is in $\cal I$ if and only if the number of connected components in $G - F$ is $\eta$. 

\begin{proposition}[\cite{oxleymatroid}]
\label{prop:graphicrep}
Graphic and co-graphic matroids are representable over any field of size  $\geq 2$ and such a representation can be found in time polynomial in the size of the  graph.
\end{proposition}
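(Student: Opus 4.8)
The plan is to exhibit explicit representations and then obtain the cographic case by matroid duality. \textbf{Graphic matroid.} First I would fix an arbitrary orientation of the edges of $G$ and build the oriented incidence matrix $A$ over the field $\mathbb{F}$: its rows are indexed by $V(G)$, its columns by $E(G)$, and for an edge $e$ oriented from $u$ to $v$ we set $A_{u,e}=1$, $A_{v,e}=-1$, and every other entry of column $e$ to $0$. A self-loop then yields a zero column and a pair of parallel edges yields proportional columns, which is exactly what we want, since a loop and a double edge are cycles and hence dependent. I would claim that for $F\subseteq E(G)$ the columns $\{A_e : e\in F\}$ are linearly independent over $\mathbb{F}$ if and only if $G[F]$ is a forest. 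Since all entries lie in $\{0,\pm 1\}$, which exist in every field, this already handles every field of size $\geq 2$.

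For the ``only if'' direction I would argue by contraposition: if $F$ contains a cycle, then assigning to each of its columns the coefficient $+1$ or $-1$ according to whether the corresponding edge is traversed along or against its orientation produces a combination in which every cycle vertex receives two contributions that cancel, so the combination is the zero vector and the columns are dependent. In characteristic $2$ orientation is irrelevant and the same cancellation holds because each cycle vertex has degree exactly two inside the cycle. For the converse I would take any nontrivial dependency $\sum_{e\in F}c_e A_e=0$, set $F'=\{e : c_e\neq 0\}$, and observe that if the subgraph with edge set $F'$ had a vertex $w$ of degree one, then the row of $w$ would have a single nonzero entry among the columns of $F'$, forcing that $c_e=0$ and contradicting $e\in F'$. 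Hence every vertex incident to $F'$ has degree at least two, so $F'$ contains a cycle and $F$ is not a forest. This shows $A$ represents the graphic matroid.

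\textbf{Cographic matroid.} Here I would use duality. The key observation is that the cographic matroid, as defined above, is exactly the dual of the graphic matroid: writing $\eta$ for the number of components of $G$, the number of components of $G-F$ equals $\eta$ iff $E(G)\setminus F$ contains a spanning forest of $G$ iff $E(G)\setminus F$ is a spanning set of the graphic matroid iff $F$ is independent in its dual. Since the dual of an $\mathbb{F}$-representable matroid is $\mathbb{F}$-representable, the cographic matroid is representable over the same fields: after row-reducing $A$ and permuting columns so that a basis appears first, we obtain a standard form $[\,I_r \mid D\,]$ with $r$ the rank of the graphic matroid, and then $[\,-D^{\top} \mid I\,]$, with the identity block of complementary size, represents the dual (the standard construction; see~\cite{oxleymatroid}).

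Finally I would verify the running time. Building $A$ takes time linear in the size of $G$; bringing $A$ to standard form by Gaussian elimination and transposing $D$ take time polynomial in the numbers of vertices and edges, with field arithmetic over $\mathbb{F}$ contributing only polynomial overhead (and being trivial over $\mathbb{F}_2$). Thus a representation is found in time polynomial in the size of the graph, as claimed. The main obstacle is not the graphic case, which is essentially the classical argument above, but checking that the component-counting definition of the cographic matroid coincides with the matroid dual of the graphic matroid and then transporting a representation correctly across duality; once that identification is in place, the representability of duals finishes the argument.
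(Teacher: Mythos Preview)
Your proof is correct and follows the standard textbook argument (incidence matrix for the graphic matroid, then duality for the cographic matroid). Note, however, that the paper does not give its own proof of this proposition: it is stated with a citation to Oxley's book~\cite{oxleymatroid} and used as a black box. So there is nothing to compare against beyond observing that what you wrote is essentially the classical proof one finds in that reference.
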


\subparagraph*{Elongation of Matroid}
Let $M=(E, \II)$ be a matroid and $k$ be an integer such that $\mrank{M} \leq k \leq |E|$. A $k$-elongation matroid $M_k$ of $M$ is a matroid with the universe as $E$ and $S
\subseteq E$ is a basis of $M_k$ if and only if, it contains a basis of $M$ and $|S| = k $. Observe that the rank of the matroid $M_k$ is $k$.

\begin{proposition}[\cite{lokshtanov2015deterministic}]\label{prop:eolongation-representation}
Let $M$ be a linear matroid of rank $r$, over a ground set of size $n$, which is representable over a field $\mathbb{F}$. Given a number $\ell \geq r$, we can compute a representation of the $\ell$-elongation of $M$, over the field $\mathbb{F}(X)$ in $\mathcal{O}(nr\ell)$ field operations over $\mathbb{F}$.
\end{proposition}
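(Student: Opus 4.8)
The plan is to reduce the elongation to the dual operation of \emph{truncation}, for which the deterministic single-variable construction of \cite{lokshtanov2015deterministic} is available, and then to dualise back. The conceptual starting point is the rank function of the elongation: for any $S\subseteq E$ one checks directly from the definition that $S$ is independent in $M_\ell$ if and only if its $M$-nullity $|S|-r_M(S)$ is at most $\ell-r$, and hence that the rank of $S$ in $M_\ell$ is $\min\{|S|,\,r_M(S)+(\ell-r)\}$. This immediately suggests the naive representation: stack onto the given $r\times n$ matrix $A$ a block $B$ of $\ell-r$ extra rows with algebraically independent entries, forming $A'=\left(\begin{smallmatrix}A\\ B\end{smallmatrix}\right)$; a standard genericity argument shows that for every column set $S$ the rank of $A'|_S$ equals exactly $\min\{|S|,\,r_M(S)+(\ell-r)\}$, so $A'$ represents $M_\ell$ over a field containing enough transcendentals.

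To obtain the representation over the single-transcendental field $\mathbb{F}(X)$ with a good operation count, I would instead pass through duality. First I would record the identity
\[
M_\ell=\bigl((M^*)_{\,n-\ell}\bigr)^*,
\]
where $M^*$ is the dual matroid (of rank $n-r$) and $(M^*)_{\,n-\ell}$ is its truncation to rank $n-\ell$, which is well defined because $\ell\ge r$ forces $n-\ell\le n-r$. I would justify this at the level of bases: a size-$\ell$ set $S$ is a basis of $M_\ell$ exactly when it contains a basis of $M$, i.e.\ is spanning in $M$, i.e.\ when $E\setminus S$ (of size $n-\ell$) is independent in $M^*$; complementation matches these sets with the complements of the bases of $(M^*)_{\,n-\ell}$, which are precisely the bases of its dual. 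The algorithm is then a three-step pipeline: from $A$ compute a representation of $M^*$ over $\mathbb{F}$ by row-reducing $A$ to standard form $[\,I_r\mid D\,]$ and taking $[\,-D^{\mathsf T}\mid I_{n-r}\,]$; apply the deterministic truncation procedure of \cite{lokshtanov2015deterministic}, which left-multiplies by a structured matrix whose entries are powers of $X$, to get a representation of $(M^*)_{\,n-\ell}$ over $\mathbb{F}(X)$; and dualise once more to obtain a representation of $M_\ell$ over $\mathbb{F}(X)$. Bookkeeping the sizes of the matrices through these steps yields the stated $\mathcal{O}(nr\ell)$ field operations over $\mathbb{F}$.

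The only genuinely delicate point---in either route---is the \emph{derandomisation}: replacing the generic entries (equivalently, the structured truncation matrix) by powers of the single indeterminate $X$ while guaranteeing that no maximal minor which ought to be nonzero collapses to the zero polynomial in $X$. This non-vanishing is exactly the determinantal lemma at the heart of \cite{lokshtanov2015deterministic}, proved by a generalised Cauchy--Binet/Laplace expansion, and it is what makes the duality route attractive: it lets me import that step verbatim for the truncation of $M^*$ rather than re-establish it for appended rows. Everything else---the rank-function computation, the two standard-form dualisations, and the operation count---is routine.
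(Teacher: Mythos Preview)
The paper does not give its own proof of this proposition; it is quoted verbatim as a black-box result from \cite{lokshtanov2015deterministic}. Your sketch---reducing elongation to truncation via the duality identity $M_\ell=((M^*)_{\,n-\ell})^*$ and then invoking the deterministic single-indeterminate truncation of that reference---is exactly the route taken there, so there is nothing substantive to contrast.

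The one place where your write-up is genuinely thin is the running-time claim. The dual $M^*$ has rank $n-r$, and the matrices appearing in your three-step pipeline have dimensions governed by $n-r$ and $n-\ell$ rather than by $r$ and $\ell$; moreover the second dualisation is carried out over $\mathbb{F}(X)$, where each arithmetic step costs several $\mathbb{F}$-operations depending on the degrees involved. So ``bookkeeping the sizes \ldots\ yields $\mathcal{O}(nr\ell)$'' is not self-evident from the duality route and would need an explicit calculation---or, as you hint, a return to the direct appended-rows construction you describe first, where the dimensions really are $\ell\times n$ and the bound is immediate once the non-vanishing lemma is in hand.
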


\subparagraph*{$\alpha$-Matroid Parity}
In our algorithms we use a known algorithm for \matparity. Below we define \matparity{} problem formally and 
state its algorithmic result. 

\defparproblem{\matparity}
{A representation $A_M$ of a linear matroid \mat{}, a partition $\mathcal{P}$ of $E$ into blocks of size $\alpha$ and a positive integer $q$.} {$\alpha,q$}
{Does there exist an independent set which is a union of $q$ blocks?}

\begin{proposition}[\cite{lokshtanov2015deterministic,marxmatroid06}]
\label{alg:matparity}
There is a deterministic algorithm for \matparity, running in time $\cO(2^{\omega q\alpha} ||A_M||^{\cO(1)})$, 
where $||A_M||$ is the total number of bits required to describe all the elements of matrix $A_M$.
\end{proposition}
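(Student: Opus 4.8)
The plan is to solve \matparity{} exactly along the lines of \cite{marxmatroid06,lokshtanov2015deterministic}, by a dynamic programme over the number of chosen blocks in which, after every step, we discard all but a bounded-size \emph{representative family} of the current partial solutions; the derandomised computation of these families through fast matrix multiplication is what produces the $2^{\omega q\alpha}$ factor. The first observation is that a union of $q$ blocks is feasible exactly when it is an independent set of $M$ of size precisely $q\alpha$, so we may assume $\mrank{M}\ge q\alpha$ (otherwise output \No). Moreover, since $\mathcal{P}$ is a partition, distinct blocks are automatically disjoint, so the only real constraint on a choice of $q$ blocks is that their union be independent.

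For the dynamic programme, let $\mathcal{F}_j$ denote the family of all independent sets that arise as a union of $j$ blocks; every member has size $j\alpha$. We never store $\mathcal{F}_j$ explicitly: instead we keep a $((q-j)\alpha)$-representative subfamily $\widehat{\mathcal{F}}_j\subseteq\mathcal{F}_j$, meaning that whenever some $X\in\mathcal{F}_j$ admits a completion $Y$ with $|Y|=(q-j)\alpha$, $X\cap Y=\emptyset$ and $X\cup Y\in\II$, the subfamily $\widehat{\mathcal{F}}_j$ already contains some $\widehat X$ with $\widehat X\cap Y=\emptyset$ and $\widehat X\cup Y\in\II$. Such a subfamily of size at most $\binom{q\alpha}{j\alpha}\le 2^{q\alpha}$ exists and can be computed; to run the computation inside $M$ one first passes to the rank-$q\alpha$ truncation/elongation supplied by Proposition~\ref{prop:eolongation-representation}. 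Starting from $\widehat{\mathcal{F}}_0=\{\emptyset\}$, the transition forms the intermediate family $\mathcal{F}'_{j+1}=\{X\cup B : X\in\widehat{\mathcal{F}}_j,\ B\in\mathcal{P},\ X\cup B\in\II,\ |X\cup B|=(j+1)\alpha\}$ and then replaces it by a $((q-j-1)\alpha)$-representative family $\widehat{\mathcal{F}}_{j+1}$.

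The point where the block structure meets the representative property is the heart of the correctness argument: if $\widehat X\in\widehat{\mathcal{F}}_j$ is a union of $j$ blocks and $Y$ is a union of $q-j$ blocks with $\widehat X\cap Y=\emptyset$ and $\widehat X\cup Y\in\II$, then disjointness forces $\widehat X\cup Y$ to be a union of $q$ \emph{distinct} blocks, hence a genuine feasible solution. A straightforward induction then shows that $\mathcal{F}_j$ contains a set extendable to a solution if and only if $\widehat{\mathcal{F}}_j$ does, so the instance is a \Yes-instance if and only if $\widehat{\mathcal{F}}_q\neq\emptyset$.

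For the running time, each of the $q$ steps has $|\mathcal{F}'_{j+1}|\le|\widehat{\mathcal{F}}_j|\cdot|\mathcal{P}|\le 2^{q\alpha}\cdot\OO(n)$, and the cost is dominated by the representative-family computation. Computing a $b$-representative family of a $p$-family $\mathcal{S}$ in a rank-$(p+b)$ linear matroid takes $\OO\!\big(|\mathcal{S}|\,\binom{p+b}{p}^{\omega-1}(p+b)^{\OO(1)}\big)$ field operations; substituting $p+b=q\alpha$ and $\binom{q\alpha}{j\alpha}\le 2^{q\alpha}$ gives $2^{q\alpha}\cdot(2^{q\alpha})^{\omega-1}\,\mathrm{poly}=2^{\omega q\alpha}\,\mathrm{poly}$ per step, hence $\OO(2^{\omega q\alpha}\,||A_M||^{\OO(1)})$ overall. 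The genuinely hard part — the one I would defer to the cited works — is exactly this step: the natural existence proof of representative families is a randomised Schwartz--Zippel/exterior-algebra argument, and turning it into a \emph{deterministic} procedure that realises the matrix-multiplication exponent $\omega$ in the exponent is the main technical obstacle behind the stated bound.
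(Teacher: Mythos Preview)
The paper does not prove this proposition at all: it is stated with a citation to \cite{lokshtanov2015deterministic,marxmatroid06} and used purely as a black box in the algorithms of Sections~\ref{sec:sim-fes-fpt} and~\ref{sec:exactalgo-fes}. There is therefore nothing in the paper to compare your argument against.

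What you have written is a faithful sketch of the proof that actually lives in the cited references: the dynamic programme over the number of selected blocks, the pruning of partial solutions via $q$-representative families in the (truncated) linear matroid, and the deterministic computation of those families in $\binom{p+b}{p}^{\omega-1}$-type time. Your correctness induction and running-time accounting are accurate, and you correctly identify that the non-trivial ingredient is the deterministic representative-family algorithm achieving the matrix-multiplication exponent, which is precisely the contribution of \cite{lokshtanov2015deterministic}. So your proposal is correct, but it goes well beyond what the present paper does; here the proposition is merely imported.
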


\section{FPT Algorithm for \simfesfull}
\label{sec:sim-fes-fpt}
In this section we design an algorithm for \minsimfes{} 
by giving a reduction to \matparity\ on the direct sum of elongated co-graphic matroids 
associated with graphs restricted to different color classes.   

We describe our algorithm, {\sf Algo-SimFES}, for \minsimfes. Let $(G,k,{\sf col} : E(G)\rightarrow 2^{[\alpha]})$ be an input instance to \minsimfes. Recall that for $i\in [\alpha]$, $G_i$ is the graph with vertex set as $V(G)$ and edge set as $E(G_i)=\{e\in E(G) \mid i\in {\sf col}(e)\}$. Let $n=\vert V(G)\vert$. Note that $n=\vert V(G_i)\vert$ for all $i\in [\alpha]$. 
Let $\eta_i$ be the number of connected components in $G_i$. 
To make $G_i$ acyclic we need to delete at least $\vert E(G_i)\vert - n + \eta_i$  edges from $G_i$. Therefore, if there is $i \in [\alpha]$ such that $\vert E(G_i)\vert -n + \eta_i > k $, then {\sf Algo-SimFES} returns \No. We let $k_i = \vert E(G_i)\vert - n + \eta_i$. 
Observe that for $i \in [\alpha]$, $0 \leq k_i \leq k$. We need to delete at least $k_i$ edges from $E(G_i)$ to make $G_i$ acyclic. 
Therefore, the algorithm {\sf Alg-SimFES} for each $i \in [\alpha]$, guesses $k'_i$, where $k_i \leq k'_i \leq k$ and computes a solution $S$ of \minsimfes{} such that $\vert S \cap E(G_i)  \vert=k_i'$. Let $M_i=(E_i,{\cal I}_i)$ be the $k_i'$-elongation of the co-graphic matroid associated with $G_i$. 

% \begin{restatable}{prop}{basisofelongationmatroid}$[*]$
% %\footnote{Proofs of results marked with $[*]$ can be found in the full version of the paper.}
% \label{prop:basisofelengatedcographic}
% Let $G$ be a graph with $\eta$ connected components and $M$ be an $r$-elongation of 
% the co-graphic matroid associated with $G$, where $r \geq \vert E(G)\vert -|V(G)|+\eta$. 
% Then $B \subseteq E(G)$  is a basis of $M$ if and only if the subgraph $G-B$ is acyclic 
% and $\vert B \vert=r$.  
% \end{restatable}

\begin{proposition}
%{prop}{basisofelongationmatroid}
%$[*]$
%\footnote{Proofs of results marked with $[*]$ can be found in the full version of the paper.}
\label{prop:basisofelengatedcographic}
Let $G$ be a graph with $\eta$ connected components and $M$ be an $r$-elongation of 
the co-graphic matroid associated with $G$, where $r \geq \vert E(G)\vert -|V(G)|+\eta$. 
Then $B \subseteq E(G)$  is a basis of $M$ if and only if the subgraph $G-B$ is acyclic 
and $\vert B \vert=r$.  
\end{proposition}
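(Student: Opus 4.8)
The plan is to unwind the definition of the $r$-elongation and reduce the claim to a statement about the co-graphic matroid of $G$. Recall that, by definition of the $r$-elongation $M$, a set $B \subseteq E(G)$ is a basis of $M$ if and only if $|B| = r$ and $B$ contains a basis of the underlying co-graphic matroid, say $M^* = (E(G), \II^*)$. Since all bases of $M^*$ have the same cardinality, namely $\mrank{M^*}$, the condition ``$B$ contains a basis of $M^*$'' is equivalent to ``the rank of $B$ in $M^*$ equals $\mrank{M^*}$,'' i.e.\ $B$ spans $M^*$. So the first step is to rewrite the basis condition for $M$ as: $B$ is a basis of $M$ iff $|B| = r$ and $B$ is a spanning set of the co-graphic matroid $M^*$.

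Next I would translate the spanning condition on $M^*$ into a connectivity statement about $G - B$ using the definition of the co-graphic matroid given in the excerpt. By definition, a set $F \subseteq E(G)$ is independent in $M^*$ exactly when $G - F$ has the same number of connected components $\eta$ as $G$. From this, one reads off that the rank of $M^*$ equals $|E(G)| - |V(G)| + \eta$ (the number of edges in a spanning forest of $G$ subtracted from $|E(G)|$), and more generally that the rank in $M^*$ of an arbitrary $B \subseteq E(G)$ equals $|E(G)|$ minus the number of edges of a spanning forest of $G - B$, that is, $|E(G)| - (|V(G)| - c)$ where $c$ is the number of connected components of $G - B$. Thus $B$ spans $M^*$ exactly when $c = \eta$, i.e.\ when deleting $B$ from $G$ does not increase the number of connected components.

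The key observation to close the argument is then the standard fact that, for $B$ with $G - B$ having the same number of components as $G$, the subgraph $G - B$ is acyclic if and only if $|B| = |E(G)| - |V(G)| + \eta$. Indeed, a subgraph on $|V(G)|$ vertices with $\eta$ components is a forest precisely when it has $|V(G)| - \eta$ edges, i.e.\ when $G-B$ retains exactly $|V(G)| - \eta$ edges, which is the same as $|B| = |E(G)| - (|V(G)| - \eta)$. Combining the three reductions: $B$ is a basis of $M$ iff ($|B| = r$ and $G - B$ has $\eta$ components); and among such $B$, since $r \geq |E(G)| - |V(G)| + \eta$, having $G - B$ acyclic forces $|B|$ to take exactly the value $|E(G)| - |V(G)| + \eta = r$ only in the boundary case, so I must argue both directions carefully in terms of the fixed value $r$. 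Concretely, for the forward direction I show a basis $B$ of $M$ has $G - B$ a spanning forest of $G$ with an additional $r - (|E(G)|-|V(G)|+\eta)$ edges removed while keeping $\eta$ components, and verify this yields an acyclic $G - B$; for the reverse direction, acyclicity of $G-B$ together with $|B| = r$ forces $B$ to be spanning in $M^*$, hence a basis of $M$.

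The main obstacle I anticipate is the bookkeeping around the elongation parameter $r$ versus the intrinsic co-graphic rank: one must be careful that ``$G - B$ acyclic and $|B| = r$'' correctly matches ``$B$ contains a co-graphic basis and $|B| = r$,'' because for $r$ strictly larger than the co-graphic rank the set $B$ deletes \emph{more} edges than a minimal feedback edge set, and I need acyclicity of $G - B$ to still be consistent with $B$ merely \emph{containing} (not equaling) a spanning set. The clean way to handle this is to note that $G - B$ acyclic with the correct edge count forces $G-B$ to be a forest with exactly $|V(G)| - \eta$ edges, hence $\eta$ components, which is precisely the spanning condition; this equivalence is what makes both the cardinality and the containment conditions line up, so I would foreground that edge-counting identity as the crux of the proof.
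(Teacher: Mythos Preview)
Your overall plan---reduce ``basis of the $r$-elongation'' to ``$|B|=r$ and $B$ spans the co-graphic matroid $M^*$''---is sound and in fact yields a cleaner argument than the paper's direct one. But the execution contains a genuine error in the key step: your formula for the co-graphic rank of $B$ is wrong, and this propagates to a wrong characterization of spanning sets.

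You write that $r^*(B) = |E(G)| - (|V(G)| - c)$, where $c$ is the number of components of $G-B$, and conclude that $B$ spans $M^*$ exactly when $c=\eta$. Test this on $B=\emptyset$: your formula gives $|E(G)|-|V(G)|+\eta$, the full rank, whereas the rank of the empty set is $0$. The correct rank function (via matroid duality, $r^*(B)=|B|-r(E)+r(E\setminus B)$ with $r$ the graphic rank) is
\[
r^*(B)=|B|+\eta-c,
\]
so $B$ spans $M^*$ iff $|B|+\eta-c=|E(G)|-|V(G)|+\eta$, i.e.\ iff $|E(G-B)|=|V(G)|-c$, which is precisely the condition that $G-B$ is \emph{acyclic}. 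Equivalently and more directly: $B$ spans $M^*$ iff $E(G)\setminus B$ is independent in the graphic matroid, i.e.\ $G-B$ is a forest. You have swapped ``independent in $M^*$'' (which is $c=\eta$) with ``spanning in $M^*$'' (which is $G-B$ acyclic).

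This swap is why your last two paragraphs run into trouble. With the wrong characterization you are forced to argue that $G-B$ has exactly $\eta$ components whenever $|B|=r$ and $G-B$ is acyclic; but for $r>|E(G)|-|V(G)|+\eta$ this is simply false (e.g.\ a triangle with $r=2$: any $2$-edge $B$ leaves a single edge, which is acyclic but has two components, not one). Once you use the correct equivalence ``$B$ spans $M^*$ $\Leftrightarrow$ $G-B$ acyclic,'' the whole proposition is immediate: $B$ is a basis of the $r$-elongation iff $|B|=r$ and $B$ spans $M^*$ iff $|B|=r$ and $G-B$ is acyclic. No separate forward/reverse bookkeeping is needed. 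The paper instead argues each direction by hand (extracting a co-graphic basis inside $B$ and checking acyclicity via a maximality argument); your duality route is shorter once the rank computation is fixed.
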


\begin{proof}
In the forward direction let $B \subseteq E(G)$ be a basis of $M$.  By Definition of $M$ it follows that $|B|=r$ and $B$ contains a basis $B_c$ of the co-graphic matroid of $G$.
Suppose $G - B$ has a cycle. This implies that $G - B_c$ has a cycle. But then, there is an edge $e \in E(G-B_c)$ whose removal from $G - B_c$ does not increase the number of
connected components in $G -B_c$. This contradicts that $B_c$ was a basis in the co-graphic matroid of $G$.

In the reverse direction let $B\subseteq E(G)$ such that $|B|=r$ and $G - B$ is acyclic. Consider a inclusion wise maximal subset $B' \subseteq B$ such that the number of connected
components in $G -B'$ is $\eta$. Observe that $G - B'$ does not contain a cycle since $G - B$ is acyclic and $B'$ is inclusion wise maximal. Therefore, it follows that $B'$ is a
basis in the co-graphic matroid of $G$. But then $B$ contains a basis of the co-graphic matroid of $G$ and $|B|=r$, therefore $B$ is a basis in $M$.
\end{proof}

By Proposition~\ref{prop:basisofelengatedcographic}, for any basis $F_i$ in $M_i$, $G_i-F_i$ is acyclic. Therefore, our objective is to compute $F \subseteq E(G)$ such that $|F| = k$ and the elements of $F$ restricted to the elements of $M_i$ form a basis for all $i\in [\alpha]$. For this we will construct an instance of \matparity\ as follows. For each $e\in E(G)$ and $i \in \coln(e)$, we use $e^i$ to denote the corresponding element in $M_i$. 
For each  $e\in E(G)$, by  ${\sf Original}(e)$ we denote the set of elements $\{ e^j \mid  j\in \coln(e) \}$.  
For each edge $e \in E(G)$,  we define ${\sf Fake}(e)=\{ e^j \mid  j\in [\alpha]-\coln(e)\}$. Finally, for each edge $e\in E(G)$, by ${\sf Copies}(e)$ we denote the set ${\sf Original}(e) \cup {\sf Fake}(e)$. Let ${\sf Fake}(G)=\bigcup_{e\in E(G)}{\sf Fake}(e)$. Furthermore, let $\tau=|{\sf Fake}(G)|=\sum_{e\in E(G)} |{\sf Fake}(e)|$ and $k'= \sum_{i \in [\alpha]}(k-k'_i)$. Let $M_{\alpha+1}=(E_{\alpha+1},{\cal I}_{\alpha+1})$ be a uniform matroid over the ground set ${\sf Fake}(G)$. That is, $M_{\alpha+1}=U_{\tau,k'}$. By Propositions~\ref{prop:uniformrep} to 
%\ref{prop:graphicrep} and 
 Proposition~\ref{prop:eolongation-representation} we know that $M_i$s are representable over $\mathbb{F}_{p}(X)$, where $p >\max(\tau,2)$ is a prime number  and their representation can be computed in polynomial time.  
Let $A_{i}$ be the linear representation of $M_{i}$ for all $i\in[\alpha+1]$.
 Notice that $E_i\cap  E_j =\emptyset$ for all $1\leq i\neq j \leq \alpha+1$. Let $M$ denote the direct sum 
$M_1\oplus \cdots \oplus M_{\alpha+1}$ with its representation matrix being $A_M$. Note that the ground set of $M$ is $\bigcup_{e\in E(G)} {\sf Copies}(e)$. Now we  define an instance of \matparity, which is the linear representation  $A_M$ of $M$ and  
the partition of ground set into ${\sf Copies}(e)$, $e\in E(G)$. 
Notice that for all $e\in E(G)$, $\vert {\sf Copies}(e)\vert=\alpha$. 
Also for each $i\in [\alpha]$, $\mrank{M_i}=k_i'$ and $\mrank{M_{\alpha+1}}=k'=\sum_{i\in [\alpha]}(k-k_i')$. This implies that $\mrank{M}=\alpha k$.

 Now {\sf Algo-SimFES} outputs \Yes\ if  there is a basis (an independent set of cardinality $\alpha k$) of $M$ which is a union of $k$ blocks in 
$M$  and otherwise outputs \No. {\sf Algo-SimFES} uses the algorithm mentioned in Proposition~\ref{alg:matparity} to check 
whether there is an independent set of $M$, composed of blocks.  
A pseudocode of {\sf Algo-SimFES} can be found in Algorithm~\ref{algoSimFES}. 

\begin{algorithm}[t]
\KwIn{ A graph $G,k\in {\mathbb N}$ and \colfn.}
\KwOut{\Yes{} if  there is a simultaneous feedback edge set of size $\leq k$ and 
\No{} otherwise.}
Let $\eta_i$ be the number of connected components in $G_i$ for all $i\in [\alpha]$\\
$k_i:= \vert E(G_i)\vert - n+\eta_i$ for all $i\in [\alpha]$\\
\If{there exists $i\in[\alpha]$ such that $k_i >k$}{\KwRet~\No \label{step:intialNO}}
\For{$(k_1',\ldots,k_{\alpha}') \in ([k]\cup \{0\})^{\alpha}$ such that $k_i\leq k_i'$ for all $i\in [\alpha]$}{
Let $M_i$ be the $k_i'$-elongation of the co-graphic matroid associated with $G_i$.\\ Let $M_{\alpha + 1}=U_{\tau,k'}$ over the gound set {\sf Fake}(G), where, $k'= \sum_{i \in [\alpha]} (k-k'_i)$. \\
Let  $M:=\bigoplus_{i\in [\alpha + 1]}M_i$.\\
For each $e\in E(G)$, let ${\sf Copies}(e)$ be the block of elements of $M$.\\
\If{there is an independent set of $M$ composed of $k$ blocks \label{step:basiscomputation} \label{step:blackbox}}{
\KwRet~\Yes\label{step:output1}\\
}
}
\KwRet~\No
  \caption{Pseudocode of {\sf Algo-SimFES}}
\label{algoSimFES}
\end{algorithm}

\begin{lemma} 
{\sf Algo-SimFES} is correct. 
\end{lemma}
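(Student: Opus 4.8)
The plan is to establish the biconditional: {\sf Algo-SimFES} returns \Yes{} if and only if $G$ has a simultaneous feedback edge set of cardinality at most $k$. The engine of the argument is a correspondence, for each tuple $(k_1',\ldots,k_\alpha')$ enumerated in the main loop, between solutions of the constructed \matparity{} instance and simultaneous feedback edge sets $F$ with $|F\cap E(G_i)|=k_i'$ for all $i$; throughout I would use that $\mrank{M}=\alpha k$ together with Proposition~\ref{prop:basisofelengatedcographic}. Two easy observations come first. The rejection in Step~\ref{step:intialNO} is sound, since if $k_i>k$ then any $S$ with $G_i-S$ acyclic must delete at least $k_i>k$ edges of $E(G_i)$, so no solution of size at most $k$ exists. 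For the boundary case $|E(G)|<k$ the whole edge set is trivially a solution, so I would assume $|E(G)|\ge k$ without loss of generality; this is what allows padding any smaller solution up to size exactly $k$, using the monotonicity of simultaneous feedback edge sets (a superset of such a set is again one). Hence it suffices to characterize when a solution of size \emph{exactly} $k$ exists.

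For the forward direction, suppose that for some tuple the algorithm finds an independent set $X$ of $M$ that is a union of $k$ blocks ${\sf Copies}(e)$, $e\in F$, with $|F|=k$. The blocks are pairwise disjoint and each has size $\alpha$, so $|X|=\alpha k=\mrank{M}$, whence $X$ is in fact a basis of $M$. By the direct-sum structure, $X\cap E_i$ is a basis of $M_i$ for every $i\in[\alpha+1]$. For $i\in[\alpha]$, identifying $X\cap E_i=\{e^i\mid e\in F,\ i\in\coln(e)\}$ with $F\cap E(G_i)$, Proposition~\ref{prop:basisofelengatedcographic} gives that $G_i-(F\cap E(G_i))=G_i-F$ is acyclic. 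Thus $F$ is a simultaneous feedback edge set of size $k$ and the output \Yes{} is correct.

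For the reverse direction, let $F$ be a simultaneous feedback edge set with $|F|=k$ (after padding), and set $k_i':=|F\cap E(G_i)|$. I would first verify that this tuple is enumerated by the loop: clearly $k_i'\le k$, while $k_i'\ge k_i$ because destroying all cycles of $G_i$ requires at least $k_i$ of its edges. For this tuple consider $X:=\bigcup_{e\in F}{\sf Copies}(e)$. For $i\in[\alpha]$ the set $X\cap E_i$ corresponds to $F\cap E(G_i)$, has size $k_i'$, and $G_i-F$ is acyclic, so by Proposition~\ref{prop:basisofelengatedcographic} it is a basis of $M_i$. The step that binds everything together is the fake block: $X\cap E_{\alpha+1}=\bigcup_{e\in F}{\sf Fake}(e)$ has size $\sum_{e\in F}(\alpha-|\coln(e)|)=\alpha k-\sum_i k_i'$, which equals $k'=\sum_{i\in[\alpha]}(k-k_i')$ precisely by the choice of $k'$ in the construction, and any size-$k'$ subset of ${\sf Fake}(G)$ is a basis of $U_{\tau,k'}$. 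Hence $X$ is a basis of each summand, so a basis of $M$ that is a union of $k$ blocks, the \matparity{} call succeeds, and the algorithm outputs \Yes{}.

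I expect the main obstacle to be articulating cleanly why the fake elements and the uniform matroid $M_{\alpha+1}$ are exactly what equalize all block sizes to $\alpha$ while still permitting $F$ to meet distinct color classes in distinct numbers of edges: an edge with few colors supplies few original elements, and the deficit must be absorbed so that the total selected count still matches $\mrank{M}=\alpha k$. The bookkeeping identity $k'=\alpha k-\sum_i k_i'$ is what makes requiring exactly $k'$ fake elements consistent with requiring exactly $k_i'$ original elements per color and exactly $k$ chosen blocks; checking that this consistency holds in both directions, and that enumerating the color-distribution tuple is what removes the remaining ambiguity, is the delicate part of the proof.
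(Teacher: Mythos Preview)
Your proof is correct and follows essentially the same approach as the paper: both directions hinge on Proposition~\ref{prop:basisofelengatedcographic} for the summands $M_1,\ldots,M_\alpha$ and on the counting identity $|{\sf Fake}(F)|=\alpha k-\sum_i k_i'=k'$ for the uniform summand $M_{\alpha+1}$, exactly as in the paper's argument. Your explicit handling of the boundary case $|E(G)|<k$ and of the padding to $|F|=k$ is a welcome clarification that the paper leaves implicit.
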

\begin{proof} 
Let $(G,k,{\sf col} : E(G)\rightarrow 2^{[\alpha]})$ be a \Yes{} instance of 
\minsimfes and let $F\subseteq E(G)$, where $|F| = k$ be a solution of 
$(G,k,{\sf col} : E(G)\rightarrow 2^{[\alpha]})$.  
Let $k_i=\vert E(G_i)\vert-n+\eta_i$, where $\eta_i$ is the number of connected components in $G_i$, for all $i\in [\alpha]$. 
For all $i\in [\alpha]$, let 
$k_i'=\vert F \cap E(G_i)\vert$. 
Since $F$ is a solution, $k_i\leq k_i'$ %\leq k$
 for all $i\in [\alpha]$. This implies that {\sf Algo-SimFES} will not execute Step~\ref{step:intialNO}. Consider the {\bf for} loop for the choice $(k_1',\ldots,k_{\alpha}')$. We claim that the columns corresponding to $S=\bigcup_{e\in F}{\sf Copies}(e)$ form a basis in $M$ and it is union of $k$ blocks. Note that $|S|= \alpha k$ by construction. For all $i\in [\alpha]$, let $F^i=\{e^i \mid e \in F,  i\in{\sf col}(e)\}$, which is subset of ground set of $M_i$.  
By Proposition~\ref{prop:basisofelengatedcographic}, for all $i\in[\alpha]$, $F^i$ is a basis for $M_i$. This takes care of all the edges in $\cup_{e\in F} {\sf Original}(e)$. Now let $S^*=S-\cup_{i\in [\alpha]} F^i =\cup_{e\in F} {\sf Fake}(e)$. Observe that $|S^*|= \sum_{i \in [\alpha]}(k-k'_i)=k'$. Also, $S^*$ is a subset of ground set of $U_{\tau, k'}$ and thus is a basis since $|S^*|= k'$. Hence $S$ is a basis of $M$. Note that $S$ is the union of blocks corresponding to $e \in F$ and hence is union of $k$ blocks. Therefore, {\sf Algo-SimFES} will output \Yes.

In the reverse direction suppose {\sf Algo-SimFES} outputs \Yes. This implies that there is a basis, say $S$, that is the union of $k$ blocks. By construction $S$ corresponds to union of the sets ${\sf Copies}(e)$ for some $k$ edges in $G$. Let these edges be $F=\{e_1,\ldots,e_k\}$. We claim that $F$ is  a solution of $(G,k,{\sf col} : E(G)\rightarrow 2^{[\alpha]})$. Clearly $\vert F \vert =k$. Since $S$ is a basis of $M$, for each $i \in [\alpha]$, $B(i) = S \cap \{e^i \mid e \in E(G_i)\}$ is a basis in $M_i$. Let $F(i)=\{e \mid e^i \in B(i)\}\subseteq F$. Since $B(i)$ is a basis of $M_i$, by Proposition~\ref{prop:basisofelengatedcographic}, $G_i-F(i)$ is an acyclic graph. %This completes the proof of the lemma. 
\end{proof}

\begin{lemma}
{\sf Algo-SimFES} runs in deterministic time $\OO(2^{\omega k \alpha+\alpha \log k} \vert V(G)\vert^{\OO(1)})$.
\end{lemma}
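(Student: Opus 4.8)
The plan is to account separately for the number of iterations of the \textbf{for} loop and for the cost incurred inside a single iteration, and then multiply the two bounds.

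First I would bound the number of iterations. The loop in Line~4 ranges over tuples $(k_1',\ldots,k_\alpha') \in ([k]\cup\{0\})^\alpha$, so there are at most $(k+1)^\alpha$ of them. Writing $(k+1)^\alpha = 2^{\alpha \log(k+1)} = 2^{\OO(\alpha \log k)}$, this factor contributes exactly the $2^{\alpha \log k}$ term in the claimed bound. The preliminary work in Lines~1--4 (computing the $\eta_i$ and $k_i$, and testing whether some $k_i > k$) takes only polynomial time and is dominated.

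Next, for a fixed tuple I would bound the work done in the loop body. Constructing each $M_i$ requires a representation of the $k_i'$-elongation of the co-graphic matroid of $G_i$: by Proposition~\ref{prop:graphicrep} the co-graphic matroid is representable over any field of size $\geq 2$ in polynomial time, and by Proposition~\ref{prop:eolongation-representation} the elongation can then be computed with $\OO(nr\ell)$ field operations, working over the extension $\mathbb{F}_p(X)$. The uniform matroid $M_{\alpha+1}=U_{\tau,k'}$ is representable in polynomial time by Proposition~\ref{prop:uniformrep}, provided the field has more than $\tau$ elements; since $\tau = |{\sf Fake}(G)| \leq \alpha|E(G)|$ is polynomially bounded, a prime $p > \max(\tau,2)$ of polynomial magnitude exists and can be found in polynomial time. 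Assembling the block-diagonal representation $A_M$ of the direct sum $M = M_1 \oplus \cdots \oplus M_{\alpha+1}$ is immediate from its displayed block-diagonal form. The key point to verify here is that $\|A_M\|$, the total bit-length of $A_M$, is polynomially bounded in $n$ and $\alpha$: the matrix $A_M$ has $\mrank{M}=\alpha k$ rows and $\sum_{e\in E(G)} \vert {\sf Copies}(e)\vert = \alpha|E(G)|$ columns, its entries lie in $\mathbb{F}_p(X)$ with $p$ and the degrees in $X$ polynomially bounded, so each entry has polynomial bit-length.

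Finally I would invoke the Matroid Parity algorithm. The instance fed to Proposition~\ref{alg:matparity} has blocks ${\sf Copies}(e)$ of size exactly $\alpha$ and asks for a union of $q=k$ blocks, so the algorithm runs in time $\OO(2^{\omega k \alpha} \|A_M\|^{\OO(1)}) = \OO(2^{\omega k \alpha} n^{\OO(1)})$ per iteration, using the polynomial bound on $\|A_M\|$ from the previous step. Multiplying the per-iteration cost by the $(k+1)^\alpha = 2^{\OO(\alpha \log k)}$ iterations yields a total running time of $\OO(2^{\omega k \alpha + \alpha \log k} \vert V(G)\vert^{\OO(1)})$, as claimed. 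The main obstacle is the bookkeeping in the third step: one must confirm that neither the elongation over the rational-function field $\mathbb{F}_p(X)$ nor the $(\alpha+1)$-fold direct sum blows $\|A_M\|$ up beyond a polynomial in $n$, since once this is established everything else follows mechanically.
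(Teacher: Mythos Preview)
Your proposal is correct and follows essentially the same approach as the paper's own proof: bound the number of \textbf{for}-loop iterations by $(k+1)^\alpha$, bound the cost of each iteration by the running time of the \matparity\ call from Proposition~\ref{alg:matparity}, and multiply. The paper's argument is terser---it simply asserts $\|A_M\|^{\OO(1)} = |V(G)|^{\OO(1)}$ without the detailed bookkeeping on the size of the elongation and direct-sum representations that you supply---but the structure and the key estimates are identical.
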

\begin{proof}
The {\sf for} loop runs $(k+1)^{\alpha}$ times. The step~\ref{step:blackbox}  
uses the algorithm mentioned in 
 Proposition~\ref{alg:matparity}, which takes time 
$\OO(2^{\omega k \alpha} \vert \vert A_M\vert\vert^{\OO(1)})=\OO(2^{\omega k \alpha} \vert V(G)\vert^{\OO(1)})$. All other 
steps in the algorithm takes polynomial time. Thus, the total running time 
is $\OO(2^{\omega k \alpha+\alpha\log k} \vert V(G)\vert^{\OO(1)})$.
\end{proof}

Since \matparity\ for $\alpha=2$ can be solved in polynomial time~\cite{lovasz1980matroid} algorithm {\sf Algo-SimFES} 
runs in polynomial time for $\alpha=2$.
This gives us the following theorem. 
\begin{theorem}
\minsimfes{} is in \FPT{}  
and when $\alpha=2$ \minsimfes{} is in {\sf P}. 
\end{theorem}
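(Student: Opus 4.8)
The plan is to establish the theorem by simply assembling the two lemmas already proved for \textsf{Algo-SimFES} together with the complexity-theoretic input facts about \matparity. The statement has two parts: that \minsimfes{} is in \FPT{} in general, and that it lies in \textsf{P} when $\alpha = 2$. I would handle each part in turn, reusing the correctness lemma and the running-time lemma verbatim.

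For the \FPT{} claim, I would argue as follows. The correctness lemma shows that \textsf{Algo-SimFES} returns \Yes{} on an input instance if and only if the instance is a \Yes{} instance of \minsimfes. The running-time lemma shows that the algorithm terminates in deterministic time $\OO(2^{\omega k \alpha + \alpha \log k} |V(G)|^{\OO(1)})$. Since $f(k,\alpha) = 2^{\omega k \alpha + \alpha \log k}$ is a computable function of the parameters $k$ and $\alpha$ alone, and the remaining factor is polynomial in the input size, this is precisely an \FPT{} running time for \minsimfes{} parameterized by $(k,\alpha)$. Hence the algorithm witnesses membership in \FPT{}.

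For the $\alpha = 2$ claim, the key observation is that the only superpolynomial contribution to the running time comes from the black-box call to \matparity{} in Step~\ref{step:blackbox}, together with the $(k+1)^{\alpha}$ iterations of the \textbf{for} loop. When $\alpha = 2$, the loop runs $(k+1)^2$ times, which is polynomial in the input. For the \matparity{} call, I would invoke the classical result of Lovász that \textsc{Matroid Parity} (i.e. the case $\alpha = 2$) is solvable in polynomial time over linearly represented matroids; this replaces the exponential Proposition~\ref{alg:matparity} bound by a polynomial one for each loop iteration. Combining a polynomial number of iterations, each costing polynomial time, yields an overall polynomial running time, so \minsimfes{} with $\alpha = 2$ is in \textsf{P}.

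The only subtlety worth flagging, and the place I would be most careful, is ensuring the matroid $M$ we feed to \matparity{} is genuinely a \emph{linearly represented} matroid so that both Proposition~\ref{alg:matparity} and Lovász's polynomial-time algorithm apply, and that blocks have size exactly $\alpha$ (here $2$). This is already guaranteed by the construction: Propositions~\ref{prop:uniformrep} through~\ref{prop:eolongation-representation} give explicit representations of the uniform, co-graphic, and elongated matroids over $\mathbb{F}_p(X)$, the direct-sum representation $A_M$ is assembled from these in polynomial time, and each block ${\sf Copies}(e)$ has cardinality $\alpha$ by construction. With these hypotheses verified, both algorithmic results apply as black boxes and the theorem follows with no further computation.
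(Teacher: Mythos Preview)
Your proposal is correct and follows essentially the same approach as the paper: both derive the \FPT{} claim from the correctness and running-time lemmas for \textsf{Algo-SimFES}, and both obtain the polynomial-time bound for $\alpha=2$ by replacing the exponential \matparity{} subroutine with Lov\'asz's polynomial-time algorithm for (linear) matroid parity. Your write-up is in fact more explicit than the paper's one-line justification, spelling out why the \textbf{for} loop is polynomial and why the linear representation hypothesis of Lov\'asz's algorithm is met.
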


\section{Hardness results for \simfes}
\label{sec:hardness}
In this section we show that when $\alpha=3$, \simfes{} is \NP-Hard. Furthermore, from our reduction we conclude that it is unlikely that {\sc Sim-FES} admits a subexponential-time algorithm. We give a reduction from {\sc Vertex Cover (VC)} in cubic graphs to the special case of {\sc Sim-FES} where $\alpha=3$. Let $(G,k)$ be an instance of VC in cubic graphs, which asks whether the graph $G$ has a  vertex cover of size at most $k$. We assume without loss of generality that $k \leq |V(G)|$. It is known that VC in cubic graphs is \NPH~\cite{mohar2001face} and unless the \ETH fails, it cannot be solved in time $\mathcal{O}^{\star}(2^{o(|V(G)|+|E(G)|)})$\footnote{$\mathcal{O}^{\star}$ notation suppresses polynomial factors in the running-time expression.}~\cite{komusiewicz2015tight}. Thus, to prove that when $\alpha=3$, it is unlikely that {\sc Sim-FES} admits a parameterized subexponential time algorithm (an algorithm of running time $\OO^{\star}(2^{o(k)})$), it is sufficient to construct (in polynomial time) an instance of the form $(G',k'=\mathcal{O}(|V(G)|+|E(G)|),{\sf col}': E'\rightarrow2^{[3]})$ of {\sc Sim-FES} that is equivalent to $(G,k)$. Refer Figure~\ref{fig:hardness1} for an illustration of the construction.

To construct $(G',k',{\sf col}': E(G') \rightarrow 2^{[3]})$, we first construct an instance $(\widehat{G},\widehat{k})$ of VC in subcubic graphs which is equivalent to $(G,k)$. 
We set 
$$V(\widehat{G})=V(G) \cup (\bigcup_{\{v,u\}\in E(G)}\{x_{v,u},x_{u,v}\})\mbox{, and}$$ 
$$E(\widehat{G})=\{\{x_{v,u},x_{u,v}\}: \{v,u\}\in E(G)\}\cup(\bigcup_{\{v,u\}\in
E(G)}\{\{v,x_{v,u}\},\{u,x_{u,v}\}\}).$$ 
That is, the graph $\widehat{G}$ is obtained from the graph $G$ by subdividing each edge in $E(G)$ twice. 

% \begin{restatable}{lem}{eqvcsfes}$[*]$ 
% \label{lem:vc-equivalence}
% $G$ has a vertex cover of size $k$ if and only if $\widehat{G}$ has a vertex cover of size $\widehat{k}=k + |E(G)|$
% \end{restatable}

\begin{lemma}
%$[*]$ 
\label{lem:vc-equivalence}
$G$ has a vertex cover of size $k$ if and only if $\widehat{G}$ has a vertex cover of size $\widehat{k}=k + |E(G)|$
\end{lemma}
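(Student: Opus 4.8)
The plan is to exploit the structure that subdividing an edge $\{v,u\} \in E(G)$ twice replaces it by an internally vertex-disjoint path $P_{vu}: v - x_{v,u} - x_{u,v} - u$ of three edges, where the internal pairs $\{x_{v,u}, x_{u,v}\}$ for distinct edges are disjoint and meet $V(G)$ only at the original endpoints. Write $e_1 = \{v, x_{v,u}\}$, $e_2 = \{x_{v,u}, x_{u,v}\}$, $e_3 = \{x_{u,v}, u\}$ for the three edges of $P_{vu}$. Both directions of the equivalence then reduce to charging internal vertices to edges of $G$ and counting.

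For the forward direction, I would start from a vertex cover $C$ of $G$ with $|C| = k$. For each edge $\{v,u\} \in E(G)$ at least one endpoint lies in $C$; if $v \in C$ I add the single middle vertex $x_{u,v}$ to the cover (symmetrically $x_{v,u}$ if only $u \in C$). Then $v$ covers $e_1$ while $x_{u,v}$ covers both $e_2$ and $e_3$, so every edge of $P_{vu}$ is covered, and as every edge of $\widehat{G}$ lies on some such path the resulting set is a vertex cover of $\widehat{G}$. Since exactly one internal vertex is added per edge of $G$, its size is $k + |E(G)| = \widehat{k}$.

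For the reverse direction I would argue by a counting inequality. Suppose $\widehat{C}$ is a vertex cover of $\widehat{G}$ with $|\widehat{C}| = \widehat{k}$, and set $n_V = |\widehat{C} \cap V(G)|$ and $n_X = |\widehat{C} \setminus V(G)|$, so $n_V + n_X = k + |E(G)|$. The central edge $e_2$ of each path forces at least one internal vertex of that path into $\widehat{C}$; moreover, if neither $v$ nor $u$ lies in $\widehat{C}$ then $e_1$ and $e_3$ force \emph{both} internal vertices in. Because the internal-vertex pairs are disjoint across edges, summing over all edges gives $n_X \ge |E(G)| + |U|$, where $U$ is the set of edges of $G$ having no endpoint in $\widehat{C}$. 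Hence $n_V \le k - |U|$. Taking $C$ to be $\widehat{C} \cap V(G)$ together with one chosen endpoint from each edge in $U$ produces a vertex cover of $G$ of size at most $n_V + |U| \le k$; padding with arbitrary vertices to reach exactly $k$ is possible since $k \le |V(G)|$.

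The subtle point, and the step I expect to need the most care, is the reverse direction: an arbitrary optimal cover of $\widehat{G}$ may use internal vertices to cover $e_1$ or $e_3$ rather than cleanly indicating which original endpoint ``should'' be taken. The device that makes the argument go through is the disjointness of the internal pairs, which lets the $e_2$-constraint and the ``both endpoints missing'' constraint be charged independently to each edge and summed; getting the inequality $n_X \ge |E(G)| + |U|$ exactly right is the crux. The remaining bookkeeping — the forward construction and the final padding — is routine.
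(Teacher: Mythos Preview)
Your proof is correct and follows essentially the same approach as the paper: the forward direction is identical (keep $C$ and add one well-chosen internal vertex per edge), and the reverse direction rests on the same observation that each path contributes at least one internal vertex to any cover of $\widehat{G}$, and two whenever neither original endpoint is present. The only cosmetic difference is that you package the reverse direction as a counting inequality $n_X \ge |E(G)| + |U|$, whereas the paper does a local replacement (swap $x_{u,v}$ for $u$ whenever both internals are present) before stripping the $|E(G)|$ remaining internal vertices; your version is arguably cleaner and also makes the final padding step explicit.
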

\begin{proof}
In the forward direction, let $S$ be a vertex cover in $G$. We will construct a vertex cover $\widehat{S}$ in $\widehat{G}$ of size at most $k+|E(G)|$. Consider an edge $\{v,u\}
\in E(G)$. If both $u,v$ belongs to $S$, then we arbitrarily add one of the vertices from $\{x_{v,u},x_{u,v}\}$ to $\widehat{S}$. If exactly one of $\{v,u\}$ belongs to $S$, say $v
\in S$ then, we add $x_{u,v}$ to $\widehat{S}$. If $u\in S$, then we add $x_{v,u}$ to $\widehat{S}$. Clearly, $\widehat{S}$ is a vertex cover in $\widehat{G}$ and is of size at
most $k+ |E(G)|$.

In the reverse direction, given a vertex cover in $\widehat{G}$. For each $\{v,u\}\in E(G)$ such that both $x_{v,u}$ and $x_{u,v}$ are in the vertex cover, we can replace $x_{u,v}$
by $u$, and then, by removing all of the remaining vertices of the form $x_{v,u}$ (whose number is exactly $|E(G)|$), we obtain a vertex cover of $G$.
\end{proof}

%Lemma~\ref{lem:vc-equivalence} proves that $G$ has a vertex cover of size $k$ if and only if $\widehat{G}$ has a vertex cover of size $\widehat{k}=k + |E(G)|$. 
Observe that in $\widehat{G}$ every path between two degree-3 vertices contains an edge of the form $\{x_{v,u},x_{u,v}\}$. Thus, the following procedure results in a partition $(M_1,M_2,M_3)$ of ${E}(\widehat{G})$ such that for all $i\in [3]$, $\{v,u\}\in M_i$ and $\{v',u'\}\in M_i\setminus \{\{v,u\}\}$, it holds that $\{v,u\}\cap\{v',u'\}=\emptyset$. Initially, $M_1=M_2=M_3=\emptyset$. For each degree-3 vertex $v$, let $\{v,x\}$, $\{v,y\}$ and $\{v,z\}$ be the edges containing $v$. We insert $\{v,x\}$ into $M_1$, $\{v,y\}$ into $M_2$, and $\{v,z\}$ into $M_3$ (the choice of which edge is inserted into which set is arbitrary). Finally, we insert each edge of the form $\{x_{v,u},x_{u,v}\}$ into a set $M_i$ that contains neither $\{v,x_{v,u}\}$ nor $\{u,x_{u,v}\}$.

We are now ready to construct the instance $(G',k',{\sf col}': E(G') \rightarrow 2^{[3]})$. Let $V(G')=V(\widehat{G})\cup V^{\star}$, where $V^{\star}=\{v^{\star}: v\in V(\widehat{G})\}$ contains a copy $v^{\star}$ of each vertex $v$ in $V(\widehat{G})$. The set $E(G')$ and coloring ${\sf col}'$ are constructed as follows. For each vertex $v\in V(\widehat{G})$, add an edge $\{v,v^{\star}\}$  into $E(G')$ and its 
color-set is $\{1,2,3\}$. For each $i\in [3]$ and for each $\{v,u\}\in M_i$, add the edges $\{v,u\}$ and $\{v^{\star},u^{\star} \}$  into $E(G')$ 
and its color-set is $\{i\}$. We set $k'=\widehat{k}$. Clearly, the instance $(G',k',{\sf col}': E(G') \rightarrow 2^{[3]})$ can be constructed in polynomial time, and it holds that $k'=\OO(|V(G)|+|E(G)|)$. 

Lemma~\ref{lem:vc-colored-graph} proves that $(\widehat{G},\widehat{k})$ is a \yes\ instance of VC if and only if $(G',k',{\sf col}': E(G') \rightarrow 2^{[3]})$ is a \yes\ instance of \simfes{}. Observe that because of the above mentioned property of the partition $(M_1,M_2,M_3)$ of $E(\widehat{G})$, we ensure that in $G'$, no vertex participates in two (or more) monochromatic cycles 
that have the same color. 
By construction, each monochromatic cycle in $G'$ is of the form $v-v^{\star}-u^{\star}-u-v$, where $\{v,u\} \in E(\widehat{G})$, and for each edge $\{v,u\}\in E(G')$, where either $v,u\in V(\widehat{G})$ or $v,u\in V^{\star}$, $G'$ contains exactly one monochromatic cycle of this form. 

\begin{lemma}\label{lem:vc-colored-graph} 
%$[*]$
$(\widehat{G},\widehat{k})$ is a \yes\ instance of VC if and only if $(G',k',{\sf col}': E(G') \rightarrow 2^{[3]})$ is a \yes\ instance of \simfes.
\end{lemma}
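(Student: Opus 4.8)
The plan is to prove the biconditional by exploiting the tight correspondence between vertex covers of $\widehat{G}$ and simultaneous feedback edge sets of $G'$, where the key structural fact is that every monochromatic cycle in $G'$ is a $4$-cycle of the form $v - v^{\star} - u^{\star} - u - v$ with $\{v,u\} \in E(\widehat{G})$, and these cycles are in bijection with the edges of $\widehat{G}$ (the color of the cycle being the unique $i$ with $\{v,u\} \in M_i$). Moreover, the property of the partition $(M_1,M_2,M_3)$ guarantees that for each fixed color $i$, the color-$i$ cycles are \emph{vertex-disjoint} on the ``top'' endpoints $v,u \in V(\widehat{G})$ (since the edges in $M_i$ form a matching in $\widehat{G}$). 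Thus for a fixed color $i$, the only edges that can simultaneously lie on two distinct monochromatic cycles are the tricolored spokes $\{v,v^{\star}\}$, and these never share a color-$i$ cycle with another spoke. I would first record these observations as the structural backbone of the argument.

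For the forward direction, given a vertex cover $\widehat{S}$ of $\widehat{G}$ with $|\widehat{S}| = \widehat{k}$, I would define the edge set $S' = \{\{v,v^{\star}\} : v \in \widehat{S}\}$, i.e. select exactly the tricolored spoke over each vertex in the cover. Then $|S'| = \widehat{k} = k'$. To verify $S'$ is a simultaneous feedback edge set, I take any color $i$ and any color-$i$ cycle $C = v - v^{\star} - u^{\star} - u - v$; since $\widehat{S}$ covers the edge $\{v,u\}$ of $\widehat{G}$, at least one of $v,u$ lies in $\widehat{S}$, so at least one of the spokes $\{v,v^{\star}\}, \{u,u^{\star}\}$ lies in $S'$ and breaks $C$. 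As every monochromatic cycle is of this form, $G'_i - S'$ is acyclic for all $i \in [3]$, so $S'$ is a valid solution.

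For the reverse direction, I start with a simultaneous feedback edge set $S'$ of size at most $k'$ and argue I may assume $S'$ consists only of spokes $\{v,v^{\star}\}$. The idea is an exchange/normalization argument: if $S'$ contains a non-spoke edge $e$ of color $i$, then $e$ lies on exactly one monochromatic cycle $C$ (the unique color-$i$ $4$-cycle through it), and I can replace $e$ by one of the two spokes on $C$ without increasing $|S'|$ while still hitting $C$. The delicate point is to confirm this replacement does not create a newly unhit monochromatic cycle of some other color — but a spoke $\{v,v^{\star}\}$ has all three colors and lies on three monochromatic cycles (one per color), so swapping in a spoke can only help cover \emph{more} cycles, never fewer. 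After normalizing, $S'$ corresponds to a vertex set $\widehat{S} = \{v : \{v,v^{\star}\} \in S'\}$ of size at most $k'$, and the fact that $S'$ hits every color-$i$ $4$-cycle translates exactly into $\widehat{S}$ covering every edge of $\widehat{G}$; thus $\widehat{S}$ is a vertex cover of size at most $\widehat{k}$.

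The main obstacle I anticipate is the normalization step in the reverse direction: making fully rigorous the claim that non-spoke edges in the solution can always be exchanged for spokes without loss. One must handle the possibility that $S'$ already contains non-spoke edges of several different colors on overlapping cycles, and argue the exchange terminates and never breaks feasibility. The cleanest route is probably to observe directly that since the color-$i$ cycles partition into vertex-disjoint $4$-cycles and each is hit by $S'$, one can choose for each such cycle a \emph{canonical} hitting spoke, and bound the total number of distinct spokes chosen across all colors by $|S'|$ using that a single spoke services all three of its incident monochromatic cycles. Framing it as choosing a spoke transversal rather than as an iterative swap avoids termination subtleties and makes the counting $|\widehat{S}| \le |S'| \le k' = \widehat{k}$ transparent.
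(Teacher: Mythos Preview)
Your proposal is correct and follows essentially the same approach as the paper: in the forward direction you map a vertex cover $\widehat{S}$ to the spoke set $\{\{v,v^\star\}:v\in\widehat{S}\}$, and in the reverse direction you normalize an arbitrary solution to consist only of spokes (using that each non-spoke edge lies on a unique monochromatic $4$-cycle, so it can be swapped for a spoke on that cycle without loss), then read off the vertex cover. Your treatment of the normalization step is in fact more careful than the paper's, which simply asserts the replacement is valid; your observation that a spoke carries all three colors and hence covers at least as many cycles as the edge it replaces is exactly the justification the paper leaves implicit.
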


\begin{proof}
In the forward direction, let $U$ be a vertex cover in $\widehat{G}$ of size at most $\widehat{k}$. Define $Q$ as the set of edges $\{\{v,v^{\star} \}: v\in U\}\subseteq E(G')$. We
claim that $Q$ is a solution to $(G',k',{\sf col}': E(G') \rightarrow 2^{[3]})$. Since $|Q|=|U|$, it holds that $|Q|\leq \widehat{k}=k'$. Now, consider a monochromatic cycle in
$G'$. Recall that such a cycle is of the form $v-v^{\star}-u^{\star}-u-v$, where $\{v,u\}\in E(\widehat{G})$. Since $U$ is a vertex cover of $\widehat{G}$, it holds that
$U\cap\{v,u\}\neq\emptyset$, which implies that $Q\cap\{\{v,v^{\star} \},\{u,u^{\star} \}\}\neq\emptyset$. 

In the reverse direction, let $Q$ be a solution to $(G',k',{\sf col}')$. 
Recall that for each edge $\{v,u\}\in E(G')$, where either $v,u\in V(\widehat{G})$ or $v,u\in V^{\star}$, $G'$ contains exactly one monochromatic cycle of this form. Therefore, if
$Q$ contains an edge of the form $\{v,u\}$ or of the form $\{v^{\star},u^{\star} \}$, such an edge can be replaced by the edge $\{v,v^{\star} \}$. Thus, we can assume that $Q$ only
contains edges of the form $\{v,v^{\star}\}$. Define $U$ as the set of vertices $\{v: \{v,v^{\star} \}\in Q\}\subseteq V(\widehat{G})$. We claim that $U$ is a vertex cover of
$\widehat{G}$ of size 
at most $\widehat{k}$. Since $|U|\leq |Q|$, it holds that $|U|\leq \widehat{k}$. Now, recall that for each edge $\{v,u\}\in E(\widehat{G})$, $G'$ contains a monochromatic cycle of
the form $v-v^{\star}-u^{\star}-u-v$. Since $Q$ is a solution to $(G',k',{\sf col}')$, it holds that $Q\cap\{\{v,v^{\star} \},\{u,u^{\star} \}\}\neq\emptyset$, which implies that 
$U\cap\{v,u\}\neq\emptyset$. 
\end{proof}

 \begin{figure}
 \medskip
 \centering
 \frame{
 \includegraphics[scale=0.5]{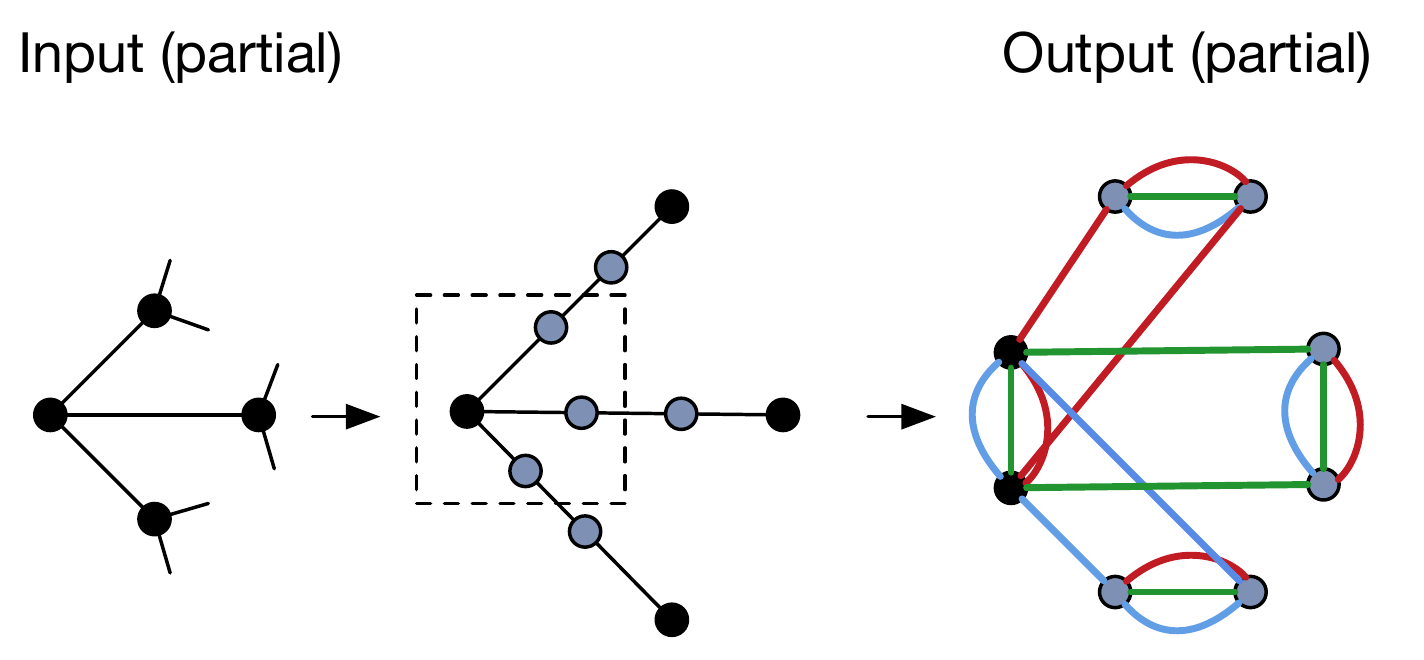}
 }
 \caption{The construction given in the proof of Theorem \ref{thm:NP_hard}.}
 \label{fig:hardness1}
 \end{figure}

We get the following theorem and its proof follows from  Lemma~\ref{lem:vc-equivalence} and Lemma~\ref{lem:vc-colored-graph}.

% \begin{restatable}{thm}{nosubexpsfes}
% \label{thm:NP_hard}
% {\sc Sim-FES} where $\alpha=3$ is \NPH. Furthermore, unless the Exponential Time Hypothesis ({\sc ETH}) fails, 
% {\sc Sim-FES} when $\alpha=3$ cannot be solved in time $\OO^*(2^{o(k)})$.
% \end{restatable}
\begin{theorem}
\label{thm:NP_hard}
{\sc Sim-FES} where $\alpha=3$ is \NPH. Furthermore, unless the Exponential Time Hypothesis ({\sc ETH}) fails, 
{\sc Sim-FES} when $\alpha=3$ cannot be solved in time $\OO^*(2^{o(k)})$.
\end{theorem}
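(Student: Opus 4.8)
The plan is to establish the two claims of Theorem~\ref{thm:NP_hard} by assembling the pieces already constructed. The reduction takes an instance $(G,k)$ of \textsc{Vertex Cover} on cubic graphs and produces, in polynomial time, an instance $(G',k',{\sf col}': E(G')\rightarrow 2^{[3]})$ of \simfes{} with $\alpha=3$. The core equivalence is routed through the intermediate subcubic graph $\widehat{G}$: Lemma~\ref{lem:vc-equivalence} shows $(G,k)$ is a \yes-instance of \textsc{Vertex Cover} if and only if $(\widehat{G},\widehat{k})$ is, with $\widehat{k}=k+|E(G)|$, and Lemma~\ref{lem:vc-colored-graph} shows $(\widehat{G},\widehat{k})$ is a \yes-instance of \textsc{Vertex Cover} if and only if $(G',k',{\sf col}')$ is a \yes-instance of \simfes{}, where $k'=\widehat{k}$. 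Chaining these two biconditionals immediately yields that $(G,k)$ is a \yes-instance of \textsc{Vertex Cover} on cubic graphs if and only if $(G',k',{\sf col}')$ is a \yes-instance of \simfes{}. Since \textsc{Vertex Cover} on cubic graphs is \NPH{} by~\cite{mohar2001face}, and the construction of $(G',k',{\sf col}')$ is polynomial-time, the \NP-hardness of \simfes{} with $\alpha=3$ follows.

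For the second, finer claim, I would track the parameter growth through the reduction. The key quantitative observation is that $k'=\widehat{k}=k+|E(G)|$, and since the input graph $G$ is cubic we have $|E(G)|=\tfrac{3}{2}|V(G)|=\OO(|V(G)|)$, together with the standing assumption $k\leq |V(G)|$; hence $k'=\OO(|V(G)|)=\OO(|V(G)|+|E(G)|)$, a \emph{linear} function of the size of the original instance. The plan is then to argue by contradiction: suppose \simfes{} with $\alpha=3$ admitted an algorithm running in time $\OO^{\star}(2^{o(k)})$. Composing this with the polynomial-time reduction would give an algorithm for \textsc{Vertex Cover} on cubic graphs running in time $\OO^{\star}(2^{o(k')})=\OO^{\star}(2^{o(|V(G)|+|E(G)|)})$, since $o(\cdot)$ applied to a linearly bounded quantity stays $o(|V(G)|+|E(G)|)$. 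By~\cite{komusiewicz2015tight}, \textsc{Vertex Cover} on cubic graphs has no algorithm of running time $\OO^{\star}(2^{o(|V(G)|+|E(G)|)})$ unless \ETH fails, which contradicts our assumption and completes the argument.

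I do not expect any genuine obstacle here, since both lemmas are already proved and the theorem is essentially a bookkeeping step; the only point requiring a little care is the parameter-preservation estimate. Specifically, one must verify that the blow-up $\widehat{k}=k+|E(G)|$ really keeps $k'$ linear in the original instance size rather than introducing a super-linear factor that would weaken the conclusion from $2^{o(k)}$ to something like $2^{o(\sqrt{k})}$. This is where the cubicity of $G$ is used essentially: it bounds $|E(G)|$ linearly in $|V(G)|$, so the reduction is a genuine \emph{linear parameterized reduction} and the subexponential lower bound transfers cleanly. I would state this estimate explicitly and note that the same chain of equivalences underlies both conclusions, so only one verification of correctness is needed for the two parts of the theorem.
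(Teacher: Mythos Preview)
Your proposal is correct and follows essentially the same approach as the paper: the paper's proof of Theorem~\ref{thm:NP_hard} is a one-line pointer to Lemmas~\ref{lem:vc-equivalence} and~\ref{lem:vc-colored-graph}, with the polynomial-time construction and the linear bound $k'=\OO(|V(G)|+|E(G)|)$ (needed for the \ETH lower bound) already laid out in the surrounding text of Section~\ref{sec:hardness}. Your write-up simply makes these ingredients explicit, including the cubicity-based estimate $|E(G)|=\tfrac{3}{2}|V(G)|$ that keeps the reduction linear.
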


\section{Tight Lower Bounds for \simfes}\label{sec:simfes-lower-bounds}
We show that \simfes{} parameterized by $k$ is $W[2]$ hard when $\alpha= \mathcal{O}(|V(G)|)$ and $W[1]$ hard when $\alpha= \mathcal{O}(\log (|V(G)|))$. Our reductions follow the approach of Agrawal et al.~\cite{Sim-FVS}.

\subsection{W[2] Hardness of \simfes{} when $\alpha=\mathcal{O}(|V(G)|)$}
We give a reduction from {\sc Hitting Set (HS)} problem  to {\sc Sim-FES} where $\alpha=\OO(|V(G)|)$. Let 
$(U=\{u_1,\ldots,u_{|U|}\}, {\cal F}=\{F_1,\ldots,F_{|{\cal F}|}\},k)$ be an instance of HS, where ${\cal F}\subseteq 2^U$, which asks whether there exists a subset $S\subseteq U$ of size at most $k$ such that for all $F\in{\cal F}$, $S\cap F\neq\emptyset$. It is known that HS 
parameterized by $k$ is W[2]-hard (see, e.g., \cite{saket-book}). Thus, to prove the result, it is sufficient to construct (in polynomial time) an instance of the form $(G,k,{\sf col}: E(G) \rightarrow 2^{[\alpha]})$ of {\sc Sim-FES} that is equivalent to $(U,{\cal F},k)$, where $\alpha=\OO(|V(G)|)$. 
We construct a graph $G$ such that $V(G)=\OO(\vert U\vert \vert {\cal F} \vert)$ and the number of colors used will be 
$\alpha=\vert {\cal F}\vert$. The intuitive idea is to have one edge per element in the universe which is colored 
with all the indices of sets in the family ${\cal F}$ that  contains the element and for each $F_i\in {\cal F}$ creating a unique monochromatic cycle with color 
$i$ which passes through all the edges corresponding to the elements it contain. We explain the 
reduction formally in the next paragraph. 

Without loss of generality we assume that each set in ${\cal F}$ contains at least two elements from $U$. The instance $(G,k,{\sf col}: E(G) \rightarrow 2^{[\alpha]})$ is constructed as follows. Initially, $V(G)=E(G)=\emptyset$. For each element $u_i\in U$, insert two new vertices into $V(G)$, $v_i$ and $w_i$, add the edge $\{v_i,w_i\}$ into $E(G)$ and let $\{j \mid F_j\in {\cal F}, u_i\in F_j\}$ be its color-set. Now, for all $1\leq i<j\leq |U|$ and for all $1\leq t\leq |\mathcal{F}|$ such that $u_i,u_j\in F_t$ and $\{u_{i+1},\ldots,u_{j-1}\}\cap F_t=\emptyset$, perform the following operation: add a new vertex into $V(G)$, $s_{i,j,t}$, add the edges $\{w_i,s_{i,j,t}\}$ and $\{s_{i,j,t},v_j\}$ into $E(G)$ and let their color-set be $\{t\}$. Moreover, for each $1\leq t\leq |{\cal F}|$, let $u_i$ and $u_j$ be the elements with the largest and smallest index contained in $F_t$, respectively, and perform the following operation: add a new vertex into $V(G)$, $s_{i,j,t}$, add the edges $\{w_i,s_{i,j,t}\}$ and $\{s_{i,j,t},v_j\}$ into $E(G)$, and let their color-set be $\{t\}$. Observe that $|V(G)|=\OO(|U||{\cal F}|)$ and that $\alpha=|{\cal F}|$. Therefore, $\alpha=\OO(|V(G)|)$. It remains to show that the instances $(G,k,{\sf col})$ and $(U,{\cal F},k)$ are equivalent. By construction, each monochromatic cycle in $G$ is of the form $v_{i_1}-w_{i_1}-s_{i_1,i_2,t}-v_{i_2}-w_{i_2}-s_{i_2,i_3,t}-\cdots-v_{i_{|F_t|}}-w_{i_{|F_t|}}-s_{i_{|F_t|},i_1,t}-v_{i_1}$, where $\{u_{i_1},u_{i_2},\ldots,u_{i_{|F_t|}}\}=F_t\in{\cal F}$, and for each set $F_t\in{\cal F}$, $G$ contains exactly one such monochromatic cycle.

%  \begin{restatable}{lem}{hardnesssfes}$[*]$ 
%   \label{lem:hitting-set-colored-graph}
% $(U,{\cal F},k)$ is a \yes\ instance of \textsc{HS} if and only if $(G,k,{\sf col}: E(G) \rightarrow 2^{[\alpha]})$ is a \yes\ instance of \simfes.
% \end{restatable}
\begin{lemma}
%$[*]$ 
  \label{lem:hitting-set-colored-graph}
$(U,{\cal F},k)$ is a \yes\ instance of \textsc{HS} if and only if $(G,k,{\sf col}: E(G) \rightarrow 2^{[\alpha]})$ is a \yes\ instance of \simfes.
\end{lemma}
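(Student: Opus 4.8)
The plan is to establish the biconditional by constructing the natural correspondence between hitting sets of $(U,{\cal F})$ and simultaneous feedback edge sets of $G$, exploiting the key structural fact recorded just before the lemma: for each set $F_t\in{\cal F}$ there is \emph{exactly one} monochromatic cycle $C_t$ in $G$, it has color $t$, and it passes through precisely the ``element edges'' $\{v_i,w_i\}$ for $u_i\in F_t$ (interspersed with the subdivision vertices $s_{i,j,t}$). Since distinct $F_t$ give distinct colors and the construction uses one color per set, the color-$t$ graph $G_t$ contains $C_t$ as its only cycle; hence killing all cycles in all color classes is exactly the task of hitting every $C_t$, and the only economical way to do so is by deleting an element edge $\{v_i,w_i\}$ lying on $C_t$.

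For the forward direction I would take a hitting set $S\subseteq U$ with $|S|\le k$ and define $Q=\{\{v_i,w_i\}: u_i\in S\}\subseteq E(G)$. Then $|Q|=|S|\le k$. For each $t$, since $S\cap F_t\neq\emptyset$, some $u_i\in S\cap F_t$, so the edge $\{v_i,w_i\}\in Q$ lies on $C_t$; removing it breaks $C_t$, and as $C_t$ is the unique cycle of $G_t$, the graph $G_t-Q$ is acyclic. This holds for every $t\in[\alpha]$, so $Q$ is a simultaneous feedback edge set of size at most $k$, proving $(G,k,{\sf col})$ is a \yes\ instance.

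For the reverse direction I would start from a solution $Q$ to $(G,k,{\sf col})$ with $|Q|\le k$ and first argue we may assume $Q$ consists only of element edges $\{v_i,w_i\}$. The point is that each subdivision edge $\{w_i,s_{i,j,t}\}$ or $\{s_{i,j,t},v_j\}$ lies on exactly one monochromatic cycle, namely $C_t$, and on no other cycle of any color; the same is true of each element edge $\{v_i,w_i\}$ with respect to the set of cycles through it. So if $Q$ contains a subdivision edge of $C_t$, I can replace it by any element edge on $C_t$ (say $\{v_i,w_i\}$ with $u_i\in F_t$) without increasing $|Q|$ and while still destroying $C_t$; the element edge destroys no fewer cycles than the subdivision edge did, since the subdivision edge only lay on $C_t$. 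After this normalization I set $S=\{u_i: \{v_i,w_i\}\in Q\}$, so $|S|\le|Q|\le k$, and for each $F_t$ the (now acyclic) graph $G_t$ forces $Q$ to contain an element edge on $C_t$, i.e. some $u_i\in F_t\cap S$, whence $S$ hits every set.

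The main obstacle is the normalization step in the reverse direction: I must justify cleanly that replacing a subdivision edge on $C_t$ by an element edge on $C_t$ keeps $Q$ a valid simultaneous feedback edge set. This rests on verifying that subdivision edges are each confined to a single monochromatic cycle, which follows from the construction (each $s_{i,j,t}$ is a private degree-two vertex used only to realize $C_t$) together with the separation property of the partition $(M_1,M_2,M_3)$-style coloring guaranteeing no edge participates in two same-colored cycles. Once that confinement is established, every other part of the argument is the routine one-edge-per-element bookkeeping, so I would spend the bulk of the write-up making the ``exactly one monochromatic cycle per $F_t$, and subdivision edges lie on no other cycle'' claim precise and then invoking it uniformly in both directions.
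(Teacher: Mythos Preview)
Your approach is essentially the same as the paper's: both directions use the correspondence $S\leftrightarrow Q=\{\{v_i,w_i\}:u_i\in S\}$, and the reverse direction normalizes $Q$ by replacing any subdivision edge incident to $s_{i,j,t}$ with the element edge $\{v_i,w_i\}$. One small correction: your justification for the normalization invokes ``the partition $(M_1,M_2,M_3)$-style coloring,'' but that partition belongs to the NP-hardness reduction in Section~\ref{sec:hardness}, not to this construction; here the confinement of subdivision edges is immediate because each such edge has the singleton color-set $\{t\}$ and $G_t$ contains exactly one cycle, which is already the reason you give first.
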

\begin{proof}
In the forward direction, let $S$ be a solution to $(U,{\cal F},k)$. Define $Q$ as the set of edges $\{\{v_i,w_i\}: u_i\in S\}\subseteq E(G)$. We claim that $Q$ is a solution to
$(G,k,{\sf col})$. Since $|Q|=|S|$, it holds that $|Q|\leq k$. Now, consider a monochromatic cycle $C$ in $G$. Recall that this cycle is of the form
$v_{i_1}-w_{i_1}-s_{i_1,i_2,t}-v_{i_2}-w_{i_2}-s_{i_2,i_3,t}-\cdots-v_{i_{|F_t|}}-w_{i_{|F_t|}}-s_{i_{|F_t|},i_1,t}-v_{i_1}$, where
$\{u_{i_1},u_{i_2},\ldots,u_{i_{|F_t|}}\}=F_t\in{\cal F}$. In particular, observe that $\{\{v_i,w_i\}: u_i\in F_t\}\subseteq E(C)$. Since $S$ is  is a hitting set of ${\cal F}$, it
holds that $S\cap F_t\neq\emptyset$. 
This implies that $Q\cap\{\{v_i,w_i\}: u_i\in F_t\}\neq\emptyset$, and therefore $Q$ is a solution of $(G,k,\textsf{col}: E(G) \rightarrow 2^{[\alpha]})$.

In the reverse direction, let $Q$ be a solution to $(G,k,{\sf col}: E(G) \rightarrow 2^{[\alpha]})$. By the form of each monochromatic cycle in $G$, if $Q$ contains an edge that
includes a vertex of the form $s_{i,j,t}$, such an edge can be replaced by the edge $\{v_i,w_i\}$. Thus, we can assume that $Q$ only contains edges of the form $\{v_i,w_i\}$.
Define $S$ as the set of elements $\{u_i: \{v_i,w_i\}\in Q\}\subseteq U$. We claim that $S$ is a solution to $(U,{\cal F},k)$. Since $|S|\leq |Q|$, it holds that $|S|\leq k$. Now,
recall that for each set $\{u_{i_1},u_{i_2},\ldots,u_{i_{|F_t|}}\}=F_t\in{\cal F}$, $G$ contains a monochromatic cycle of the form
$v_{i_1}-w_{i_1}-s_{i_1,i_2,t}-v_{i_2}-w_{i_2}-s_{i_2,i_3,t}-\cdots-v_{i_{|F_t|}}-w_{i_{|F_t|}}-s_{i_{|F_t|},i_1,t}-v_{i_1}$. Since $Q$ is a solution of $(G,k,{\sf col}: E(G)
\rightarrow 2^{[\alpha]})$, it holds that $Q\cap\{\{v_i,w_i\}: u_i\in F_t\}\neq\emptyset$. This implies that $S\cap F_t\neq\emptyset$.
\end{proof}

\begin{theorem}\label{thm:tightW2}
{\sc Sim-FES} parameterized by $k$, when $\alpha=\OO(|V(G)|)$, is $W[2]$-hard.
\end{theorem}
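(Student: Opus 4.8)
The plan is to prove Theorem~\ref{thm:tightW2} by combining the polynomial-time reduction already constructed in this section with the established W[2]-hardness of \textsc{Hitting Set}. The reduction takes an arbitrary instance $(U,\mathcal{F},k)$ of \textsc{HS} and produces, in time polynomial in $|U|+|\mathcal{F}|$, an instance $(G,k,{\sf col}: E(G)\rightarrow 2^{[\alpha]})$ of \simfes{} with $|V(G)|=\OO(|U||\mathcal{F}|)$ and $\alpha=|\mathcal{F}|$. First I would observe that the solution-size parameter is preserved exactly: the target size $k$ is identical on both sides, so this is a parameterized reduction in the strict sense (no blow-up of the parameter by any function of $k$). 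The correctness of the reduction, namely that $(U,\mathcal{F},k)$ is a \yes{} instance of \textsc{HS} if and only if $(G,k,{\sf col})$ is a \yes{} instance of \simfes{}, is exactly the content of Lemma~\ref{lem:hitting-set-colored-graph}, which I may invoke directly.

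Next I would verify that the constraint $\alpha=\OO(|V(G)|)$ demanded by the theorem statement is met by the construction. Since $\alpha=|\mathcal{F}|$ and $|V(G)|=\OO(|U||\mathcal{F}|)\geq |\mathcal{F}|$ (using the standing assumption that each set in $\mathcal{F}$ has at least two elements, so $|U|\geq 2$), we indeed have $\alpha=\OO(|V(G)|)$, placing the produced instance in the promised restricted regime of \simfes{}. Putting these observations together yields a polynomial-time parameterized reduction from \textsc{Hitting Set} parameterized by solution size to \simfes{} parameterized by $k$ in the regime $\alpha=\OO(|V(G)|)$. Since \textsc{HS} parameterized by $k$ is W[2]-hard, as recalled earlier in this section, the existence of such a reduction transfers the W[2]-hardness to \simfes{}, completing the proof.

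The only points requiring genuine care—rather than routine bookkeeping—are the two structural facts underlying Lemma~\ref{lem:hitting-set-colored-graph}: that every monochromatic cycle of color $t$ in $G$ has the canonical form traversing exactly the edges $\{v_i,w_i\}$ for $u_i\in F_t$ together with the connecting $s$-vertices, and that $G$ contains precisely one such cycle per color. These two facts are what guarantee, in the reverse direction, that any $s$-incident edge in a \simfes{} solution can be exchanged for the corresponding element-edge $\{v_i,w_i\}$ without loss, so that we may assume the solution consists solely of element-edges and thus projects cleanly to a hitting set. I expect this cycle-structure argument—already carried out in the discussion preceding and within Lemma~\ref{lem:hitting-set-colored-graph}—to be the main conceptual obstacle; once it is in hand, the theorem follows immediately by chaining the reduction with the hardness of \textsc{HS}.
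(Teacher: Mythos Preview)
Your proposal is correct and follows exactly the paper's approach: the theorem is stated without a separate proof in the paper, being an immediate consequence of the polynomial-time parameter-preserving reduction from \textsc{Hitting Set} together with Lemma~\ref{lem:hitting-set-colored-graph} and the observation that $\alpha=|\mathcal{F}|=\OO(|V(G)|)$. Your write-up simply makes these implicit steps explicit, which is fine.
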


\subsection{W[1] Hardness of \simfes{} when $\alpha=\mathcal{O}(\log |V(G)|)$}
We modify the reduction given in the proof of Theorem \ref{thm:tightW2} to show that when $\alpha=\OO(\log|V(G)|)$, {\sc Sim-FES} is W[1]-hard with respect to the parameter $k$. This result implies that the dependency on $\alpha$ of our $\OO((2^{\mathcal{O}(\alpha)})^kn^{\OO(1)})$-time algorithm for {\sc Sim-FES} is optimal in the sense that it is unlikely that there exists an $\OO((2^{o(\alpha)})^kn^{\OO(1)})$-time algorithm for this problem.

We give a reduction from a variant of {\sc HS}, called {Partitioned Hitting Set ({\sc PHS})}, to {\sc Sim-FES} where $\alpha=\OO(\log|V(G)|)$. The input of {\sc PHS} consists of a universe $U$, a collection ${\cal F}=\{F_1,F_2,\ldots,F_{|{\cal F}|}\}$, where each $F_i$ is a family of {\em disjoint} subsets of $U$, and a parameter $k$. The goal is to decide the existence of a subset $S\subseteq U$ of size at most $k$ such that for all $f\in(\bigcup_{F \in \cal F} F)$, $S\cap f\neq\emptyset$. It is known that the special case of {\sc PHS} where $|{\cal F}|=\OO(\log(|U||(\bigcup{\cal F})|))$ is W[1]-hard when parameterized by $k$ (see, e.g.,~\cite{Sim-FVS}). Thus, to prove the theorem, it is sufficient to construct (in polynomial time) an instance of the form $(G,k,{\sf col}: E(G) \rightarrow 2^{[\alpha]})$ of {\sc Sim-FES} that is equivalent to $(U,{\cal F},k)$, where $\alpha=\OO(\log|V(G)|)$. 
The construction of the graph $G$ is exactly similar to the one in Theorem~\ref{thm:tightW2}. But instead of 
creating a unique monochromatic cycle with a color $i$ for each $f_i\in \bigcup{\cal F}$, for each $F_i\in {\cal F}$ 
we create $\vert F_i \vert$ vertex disjoint cycles of same color $i$. Since for each $F\in {\cal F}$ the sets 
in $F$ are pairwise disjoint, guarantees the correctness. Formal description of the reduction is given below. 

Without loss of generality we assume that each set in $\bigcup_{F \in \cal F} F$ contains at least two elements from $U$. The instance $(G,k,{\sf col}: E(G) \rightarrow 2^{[\alpha]})$ is constructed as follows. Initially, $V(G)=E(G)=\emptyset$. For each element $u_i\in U$, insert two new vertices $v_i$ and $w_i$ into $V(G)$, and add the edge $\{v_i,w_i\}$ into $E(G)$ with its color-set being $\{j: F_j\in {\cal F}, u_i\in (\bigcup F_j)\}$. Now, for all $1\leq i<j\leq |U|$ and for all $1\leq t\leq |\mathcal{F}|$ such that there exists $f\in F_t$ satisfying $u_i,u_j\in f$ and $\{u_{i+1},\ldots,u_{j-1}\}\cap f=\emptyset$, perform the following operation: add a new vertex $s_{i,j,t}$ into $V(G)$, add the edges $\{w_i,s_{i,j,t}\}$ and $\{s_{i,j,t},v_j\}$ into $E(G)$ with both of its color-set being $\{t\}$. Moreover, for each $1\leq t\leq |{\cal F}|$ and $f\in F_t$, let $u_i$ and $u_j$ be the elements with the largest and smallest index contained in $f$, respectively, we perform the following operation: add a new vertex into $V(G)$, $s_{i,j,t}$, add the edges $\{w_i,s_{i,j,t}\}$ and $\{s_{i,j,t},v_j\}$ into $E(G)$, and let their color-set be $\{t\}$. Observe that $|V(G)|=\OO(|U||(\bigcup{\cal F})|)$ and that $\alpha=|{\cal F}|$. Since $|{\cal F}|=\OO(\log(|U||(\bigcup{\cal F})|))$, we have that $\alpha=\OO(\log|V(G)|)$. Since the sets in each family $F_i$ are disjoint, the construction implies that each monochromatic cycle in $G$ is of the form $v_{i_1}-w_{i_1}-s_{i_1,i_2,t}-v_{i_2}-w_{i_2}-s_{i_2,i_3,t}-\cdots-v_{i_{|f|}}-w_{i_{|f|}}-s_{i_{|f|},i_1,t}-v_{i_1}$, where $\{u_{i_1},u_{i_2},\ldots,u_{i_{|F_t|}}\}=f$ for a set $f\in F_t\in{\cal F}$, and for each set $f\in F_t\in{\cal F}$, $G$ contains a monochromatic cycle of this form. By using the arguments similar to one in the proof of Lemma~\ref{lem:hitting-set-colored-graph}, we get that the instances $(G,k,{\sf col}: E(G) \rightarrow 2^{[\alpha]})$ and $(U,{\cal F},k)$ are equivalent. Hence we get the  following theorem.

\begin{theorem}\label{thm:tightW1}
{\sc Sim-FES} parameterized by $k$, when $\alpha=\OO(\log|V(G)|)$ is $W[1]$-hard.
\end{theorem}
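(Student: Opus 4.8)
The plan is to argue that the construction described above is a polynomial-time parameterized reduction from the special case of {\sc Partitioned Hitting Set} (PHS) in which $|\mathcal{F}| = \OO(\log(|U||\bigcup\mathcal{F}|))$, which is known to be W[1]-hard parameterized by the solution size $k$~\cite{Sim-FVS}, to {\sc Sim-FES} with $\alpha = \OO(\log|V(G)|)$. Three of the four ingredients are already in place from the construction: the instance $(G, k, {\sf col})$ is built in polynomial time, the parameter $k$ is carried over unchanged, and we have already observed that $|V(G)| = \OO(|U||(\bigcup\mathcal{F})|)$ and $\alpha = |\mathcal{F}| = \OO(\log|V(G)|)$. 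The remaining task is to verify that $(U, \mathcal{F}, k)$ is a \yes\ instance of PHS if and only if $(G, k, {\sf col})$ is a \yes\ instance of {\sc Sim-FES}; once this is established, W[1]-hardness of PHS transfers to {\sc Sim-FES}, giving the theorem.

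For the equivalence I would follow the two-direction argument of Lemma~\ref{lem:hitting-set-colored-graph} essentially verbatim, the only change being that a single set $f \in F_t$, rather than the whole family indexed by the color, governs each monochromatic cycle. In the forward direction, from a hitting set $S$ of PHS (so $S \cap f \neq \emptyset$ for every $f \in \bigcup_{F\in\mathcal{F}} F$) I take $Q = \{\{v_i, w_i\} : u_i \in S\}$; since each color-$t$ monochromatic cycle passes through exactly the edges $\{v_i, w_i\}$ with $u_i \in f$ for some $f \in F_t$, and $S$ hits $f$, the set $Q$ breaks every monochromatic cycle, while $|Q| = |S| \le k$. In the reverse direction, from a \simfes\ solution $Q$ I first canonicalize: any edge of $Q$ incident to a subdivision vertex $s_{i,j,t}$ lies on a unique monochromatic cycle and can be swapped for the corresponding $\{v_i, w_i\}$ without affecting feasibility, so I may assume $Q \subseteq \{\{v_i, w_i\} : u_i \in U\}$; setting $S = \{u_i : \{v_i, w_i\} \in Q\}$ then yields $S \cap f \neq \emptyset$ for each $f$ because the corresponding cycle is destroyed, and $|S| \le |Q| \le k$.

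The step that requires care, and the reason the disjointness hypothesis of PHS is exploited, is the correctness of this canonicalization together with the claim that breaking cycles corresponds exactly to hitting the sets $f$. Because the sets inside any single family $F_i$ are pairwise disjoint, the $|F_i|$ monochromatic cycles of color $i$ are \emph{vertex-disjoint}; consequently no vertex $v_i$ (equivalently, no edge $\{v_i,w_i\}$) lies on two distinct monochromatic cycles of the same color, and each cycle can be treated independently. This is precisely what guarantees that the swap in the reverse direction does not create a new same-color cycle and that a hit of one $f$ cannot be reused to cover a different $f' \in F_i$. This is the only genuinely new point relative to the {\sc HS} reduction of Theorem~\ref{thm:tightW2}, where each color indexed a single set and hence a single cycle; verifying that disjointness yields the vertex-disjoint-cycles property is the main, though routine, obstacle, after which the equivalence and the theorem follow.
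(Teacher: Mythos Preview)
Your proposal is correct and follows essentially the same approach as the paper: both reduce from the W[1]-hard special case of {\sc Partitioned Hitting Set} via the construction already described, and both justify the equivalence by invoking the argument of Lemma~\ref{lem:hitting-set-colored-graph}, noting that the disjointness of the sets within each $F_i$ makes the monochromatic cycles of any fixed color vertex-disjoint. Your write-up is in fact a bit more explicit than the paper's one-line appeal to that lemma, spelling out why the canonicalization step remains sound in the partitioned setting.
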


\section{Kernel for \simfesfull{}}\label{sec:kernel}
In this section we give a kernel for {\sc Sim-FES} with $\OO((k\alpha)^{\mathcal{O}(\alpha)})$ vertices. We start by applying preprocessing rules similar in spirit to the ones used to obtain a kernel for {\sc Feedback Vertex Set}, but it requires subtle differences due to the fact that we handle a problem where edges rather than vertices are deleted, as well as the fact that the edges are colored (in particular, each edge in {\sc Sim-FES} has a color-set, while each vertex in {\sc Sim-FVS} is uncolored). We obtain an approximate solution by computing a spanning tree per color. We rely on the approximate solution to bound the number of vertices whose degree in certain subgraphs of $G$ is not equal to $2$. Then, the number of the remaining vertices is bounded by adapting the ``interception''-based approach of Agrawal et al.~\cite{Sim-FVS} to a form relevant to {\sc Sim-FES}.

Let $(G,k,{\sf col}: E(G) \rightarrow 2^{[\alpha]})$ be an instance of {\sc Sim-FES}. For each color $i \in [\alpha]$ recall $G_i$ is the graph consisting of the vertex-set $V(G)$ and the edge-set $E(G_i)$ includes every edge in $E(G)$ whose color-set contains the color $i$. It is easy to verify that the following rules are correct when applied exhaustively in the order in which they are listed. We note that the resulting instance can contain multiple edges.

\begin{itemize}
\item {\bf Reduction Rule 1:} If $k < 0$, return that $(G,k,{\sf col}: E(G) \rightarrow 2^{[\alpha]})$ is a \No\ instance.
\item {\bf Reduction Rule 2:} If for all $i\in [\alpha]$, $G_i$ is acyclic, return that $(G,k,{\sf col}: E(G) \rightarrow 2^{[\alpha]})$ is a \Yes\ instance.
\item {\bf Reduction Rule 3:} If there is a self-loop at a vertex $v \in V(G)$, then remove $v$ from $G$ and decrement $k$ by 1. 
\item {\bf Reduction Rule 4:} If there exists an isolated vertex in $G$, then remove it. 
\item {\bf Reduction Rule 5:} If there exists $i \in [\alpha]$ and an edge whose color-set contains $i$ but it does not participate in any cycle in $G_i$, remove $i$ from its color-set. If the color-set becomes empty, remove the edge.
\item {\bf Reduction Rule 6:} If there exists $i \in [\alpha]$ and a vertex $v$ of degree exactly two in $G$, remove $v$ and connect its two neighbors by an edge whose color-set is the same as the color-set of the two edges incident to $v$ 
(we prove in Lemma~\ref{lem:simfes-deg-2-rule} that the color set of two edges are same).
\end{itemize}

% \begin{restatable}{lem}{reductionrulesafesfes} $[*]$ 
% Reduction rule 6 is safe.
% \label{lem:simfes-deg-2-rule}
% \end{restatable}
\begin{lemma} 
%$[*]$ 
Reduction rule 6 is safe.
\label{lem:simfes-deg-2-rule}
\end{lemma}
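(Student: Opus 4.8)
The plan is to prove that Reduction Rule 6 preserves the answer to the {\sc Sim-FES} instance. There are two things to establish. First, I must justify the parenthetical claim in the rule's statement: if $v$ has degree exactly two in $G$, then its two incident edges carry the \emph{same} color-set. Second, I must show that the instance obtained by contracting $v$ (replacing its two incident edges by a single edge whose color-set equals their common color-set) is a \Yes\ instance if and only if the original instance is.

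First I would address the color-set claim. Let $e_1=\{a,v\}$ and $e_2=\{v,b\}$ be the two edges incident to $v$, and suppose for contradiction that ${\sf col}(e_1)\neq{\sf col}(e_2)$; say there is a color $i\in{\sf col}(e_1)\setminus{\sf col}(e_2)$. Since Reduction Rules~3--5 have already been applied exhaustively, every edge of every color-set participates in a cycle in the corresponding color graph (Rule~5) and there are no self-loops or isolated vertices. In particular, $e_1$ lies on a cycle in $G_i$. But $v$ has degree two in $G$, so its degree in $G_i$ is at most two; since $e_1\in E(G_i)$ contributes one to $d_{G_i}(v)$ and $e_2\notin E(G_i)$, we get $d_{G_i}(v)=1$. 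A vertex of degree one cannot lie on a cycle, contradicting that $e_1$ participates in a cycle in $G_i$. The symmetric argument rules out a color in ${\sf col}(e_2)\setminus{\sf col}(e_1)$, so ${\sf col}(e_1)={\sf col}(e_2)$, and the rule is well-defined. Call this common color-set $c$, and let $G'$ denote the reduced graph with new edge $e=\{a,b\}$ of color-set $c$ (with $k$ and ${\sf col}$ updated accordingly).

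Next I would prove equivalence of the two instances. The key structural observation is that for each color $i\in[\alpha]$, the graph $G'_i$ is exactly the graph obtained from $G_i$ by suppressing the degree-$\leq 2$ vertex $v$: if $i\in c$ then $e_1,e_2\in E(G_i)$ are replaced by the single edge $e$ (a topological edge-contraction through $v$), and if $i\notin c$ then $v$ is isolated in $G_i$ and $G'_i$ simply drops it. Suppressing a degree-two vertex is a homeomorphism of graphs, so it induces a natural bijection between the cycles of $G_i$ and the cycles of $G'_i$, and hence between feedback edge sets. Concretely, I would define a bijection $\varphi$ from edge-subsets of $G$ avoiding a trivial case to edge-subsets of $G'$ by sending $e_1$ and $e_2$ each to $e$ and fixing all other edges; under $\varphi$, a set $S\subseteq E(G)$ makes $G_i-S$ acyclic for all $i$ if and only if $\varphi(S)$ makes $G'_i-\varphi(S)$ acyclic for all $i$. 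For the forward direction, given a solution $S$ of size $\le k$ for $G$, I may assume $|S\cap\{e_1,e_2\}|\le 1$ (if both lie in $S$, one can be dropped while keeping every $G_i-S$ acyclic, since $e_1,e_2$ lie on a common cycle through $v$ exactly when $i\in c$, and deleting either one already breaks every cycle through $v$); then $\varphi(S)$ is a solution of size $\le k$ for $G'$. For the reverse direction, given a solution $S'$ of size $\le k$ for $G'$, I pull it back: if $e\in S'$, replace it by either $e_1$ or $e_2$; otherwise keep $S'$ unchanged. In both directions the cycle correspondence guarantees acyclicity of every color graph is preserved.

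The main obstacle is the careful bookkeeping in the ``at most one of $e_1,e_2$'' normalization and in verifying that the homeomorphism really does biject monochromatic cycles color-by-color; the subtlety is that $c$ may be a proper subset of $[\alpha]$, so $v$ behaves as a suppressible vertex only in the color graphs $G_i$ with $i\in c$ and as an isolated (hence irrelevant) vertex otherwise. Once the per-color picture is pinned down, the equivalence follows because acyclicity is required simultaneously for every $i\in[\alpha]$ and the suppression operation commutes with edge deletion in each color graph independently.
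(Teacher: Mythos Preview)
Your proposal is correct and follows essentially the same approach as the paper: first use Reduction Rule~5 to argue that the two edges incident to $v$ must carry the same color-set (otherwise one of them would be a bridge in some $G_i$), and then establish equivalence of the instances via the natural correspondence between cycles through $v$ in $G_i$ and cycles through the new edge in $G'_i$. The paper carries this out by explicit case analysis on which of $e_1,e_2$ lie in a solution, whereas you phrase it more topologically via suppression/homeomorphism and a preliminary normalization $|S\cap\{e_1,e_2\}|\le 1$; your version is also slightly more general in that it allows the common color-set $c$ to be arbitrary, while the paper's write-up tacitly treats it as a singleton.
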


\begin{proof}
 Let $G$ be a graph with coloring function \colfn. Let $v$ be a vertex in $V(G)$ such that $v$ has total degree $2$ in $G$. We have applied Reduction Rule 1 to 5 exhaustively (in
that order). Therefore, when Rule 6 is applied, the edges incident to $v$ have the same color-set say $i$, since otherwise Rule 5 would be applicable. Let $u,w$ be the neighbors of
$v$ in $G_i$, where $i \in [\alpha]$. Consider the graph $G'$ with vertex set as $V(G)\setminus \{v\}$ and edge set as $E(G')=(E(G) \setminus \{\{v,u\}, \{v,w\}\}) \cup 
\{\{u,w\}\}$ and coloring
function ${\sf col}'$ such that ${\sf col}'(\{u,w\})= {\sf col}(\{u,w\}) \cup \{i\}$ and for all other edges $e\in E(G') \setminus \{\{u,w\}\}$, ${\sf col}'(e)= {\sf col}(e)$. We
show that $(G,k,$ \colfn$)$ is a \Yes\ instance of \simfes\ if 
and only if $(G',k, {\sf col}')$ is a \Yes\ instance of \simfes.

In the forward direction, let $S$ be a solution to \simfes\ in $G$ of size at most $k$. Suppose $S$ is not a solution in $G'$. Then, there is a cycle $C$ in $G'_t$, for some $t \in
[\alpha]$. Note that $C$ cannot be a cycle in $G'_j$ as $G'_j=G_j$, for $j \in [\alpha] \setminus \{i\}$. Therefore $C$ must be a cycle in $G'_i$. All the cycles $C'$ not
containing the edge $\{u,w\}$ are also cycles in $G_i$ and therefore $S$ must contain some edge from $C'$. It follows that $C$ must contain the edge $\{u,w\}$. Note that the edges
$(E(C) \setminus \{\{u,w\}\}) \cup \{\{v,u\},\{w,v\}\}$ form a cycle in $G_i$. Therefore $S$ must contain an edge from $E(C) \cup \{\{v,u\},\{w,v\}\}$. We consider the following
cases:
\begin{itemize}
 \item Case 1: $\{v,u\},\{w,v\} \notin S$. In this case $S$ must contains an edge from $E(C)\setminus \{\{u,w\}\}$. Hence, $S$ is a solution in $G'$.
 \item Case 2: At least one of $\{v,u\},\{w,v\}$ belongs to $S$, say $\{v,u\} \in S$. Let $S'=(S \setminus \{\{v,u\}\}) \cup \{\{u,w\}\}$. Observe that $S'$ intersects all cycles
in  $G'_i$. Therefore $S'$ is a solution in $G'$ of size at most $k$.
\end{itemize}

In the reverse direction, consider a solution $S$ to \simfes\ in $G'$. If $S$ is a solution in $G$ we have a proof of the claim. Therefore, we assume that $S$ is not a solution in
$G$. $S$ intersects all cycles in $G_j$, since $G_j=G'_j$, for all $j \in [\alpha] \setminus \{i\}$. All cycles in $G_i$ not containing $v$ are also cycles in $G'_i$ and therefore
$S$ intersects all such cycles. 

A cycle in $G_i$ containing $v$ must contain $u$ and $w$ ($v$ is a degree-two vertex in $G_i$). We assume that there is a cycle $C$ containing $v$ in $G$ such that $S$ does not
intersect $C$ (otherwise $S$ is a solution in $G$). Note that in $G'_i$ we added an edge $\{u,w\}$ and we keep multi-edges. Corresponding to each copy of $\{u,w\}$ we have a cycle
in $G'$ with edges $(E(C) \setminus \{\{v,u\}, \{v,w\}\}) \cup \{u,w\}$. Therefore, $S$ must contain all copies of $\{u,w\}$. We create a solution $S'$ by replacing a copy (with
same color as $\{v,u\}$) of $\{u,w\} \in S$ by $\{v,u\}$. We claim that $S'$ is a solution in $G$. $S'$ intersects all the cycles in $G$ containing $v$. Observe that all cycles in
$G$ not containing $v$ are also cycles in $G'$ and they do not contain the deleted copy of the edge $\{u,w\}$ from $S$. Therefore, they are intersected by $S'$. Therefore, $S'$ is
a solution in $G$ of size at most $k$.
\end{proof}

We apply Reduction Rule 1 to 6 exhaustively (in that order). The safeness of Reduction Rules 1 to 5 are easy to see. 
Lemma~\ref{lem:simfes-deg-2-rule} proves the safeness Reduction Rule 6. After this, we follow the approach similar to that in~\cite{Sim-FVS} to bound the size of the instance. This
gives the following theorem.

\begin{theorem}\label{thm:kernel}
%$[*]$
{\sc Sim-FES} admits a kernel with $(k\alpha)^{\mathcal{O}(\alpha)}$ vertices.
\end{theorem}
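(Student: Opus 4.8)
The goal is to prove Theorem~\ref{thm:kernel}: after exhaustively applying Reduction Rules 1--6, the reduced instance of \simfes{} has $(k\alpha)^{\mathcal{O}(\alpha)}$ vertices. The plan is to first secure an approximate solution and a structural decomposition of the reduced graph, then bound separately the ``high-activity'' vertices and the long degree-2 paths that connect them, following the ``interception''-based counting argument of Agrawal et al.~\cite{Sim-FVS} adapted to the edge-colored setting.

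First I would argue that after exhaustive application of the rules we may compute a small approximate solution. For each color $i\in[\alpha]$, taking a spanning forest of $G_i$ and deleting the at most $k_i \le k$ remaining edges yields a feedback edge set for $G_i$; the union $X$ over all colors is a (not necessarily size-$k$) simultaneous feedback edge set with $|X| \le \alpha k$. If the instance has not been rejected by Reduction Rule~1, I would also establish that each $G_i$ individually requires at most $k$ deletions, so the endpoints $Y$ of the edges in $X$ form a set of size $\OO(\alpha k)$ that meets every monochromatic cycle. The next step is to analyze the graph $G - X$ (or the relevant per-color subgraphs): after deleting $X$, each $G_i - X$ is a forest, so the vertices of ``high degree'' in the various color classes are controlled. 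Concretely, I would bound the number of vertices whose total degree, or whose degree in some $G_i$, exceeds $2$: since Reduction Rule~6 has eliminated all total-degree-$2$ vertices (they are suppressed into single edges), every remaining vertex either lies in $Y$, is a branch vertex of one of the forests $G_i - X$, or is a leaf/endpoint in such a forest, and each of these categories is bounded by a polynomial in $|X| = \OO(\alpha k)$ using the standard fact that a forest has at most as many degree-$\ge 3$ vertices as it has leaves.

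With the number of ``interesting'' vertices bounded, the heart of the argument is to bound the remaining vertices, which lie on monochromatic degree-2 paths between interesting vertices. Here I would transcribe the interception argument of~\cite{Sim-FVS}: fix the set $Y$ of $\OO(\alpha k)$ special vertices, and for each pair of special vertices and each subset of colors, count the number of internally-degree-2 paths that a solution of size $k$ must ``intercept.'' Because each edge carries a color-set that is a subset of $[\alpha]$, a path between two special vertices is characterized by its color-signature, of which there are at most $2^\alpha$ possibilities, and the combinatorial explosion in the number of admissible configurations is what produces the $(k\alpha)^{\OO(\alpha)}$ bound rather than a polynomial one. I would show that if too many such paths share the same endpoints and color profile, the instance either contains a reducible configuration (contradicting exhaustiveness of Rules~5 and~6) or admits a reduction that discards the redundant path, so the number of surviving degree-2 path-interiors is $(k\alpha)^{\OO(\alpha)}$. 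Summing the bound on special vertices and the bound on path-interiors gives the claimed total of $(k\alpha)^{\mathcal{O}(\alpha)}$ vertices.

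The main obstacle I anticipate is the color-set bookkeeping in the interception step. Unlike \textsc{Sim-FVS}, where vertices are uncolored and a single deletion removes a vertex from all color classes simultaneously, here each deleted edge only affects the cycles in the color classes inside its color-set, so the accounting must track, for each candidate path, exactly which colors it is relevant to and which deletions can intercept it. Getting the bound to come out as $(k\alpha)^{\OO(\alpha)}$ rather than something exponentially worse requires carefully exploiting the structural guarantee from Reduction Rule~6 (that degree-2 paths are monochromatic, so their color-set is a single well-defined color) together with the fact that at most $\OO(\alpha)$ distinct colors participate. I would therefore spend most of the effort verifying that, for each color and each pair of special endpoints, the number of parallel degree-2 paths a valid size-$k$ solution can be forced to hit is bounded, and that any excess triggers either a \No-instance declaration or a safe further reduction, so that the final count multiplies out to $(k\alpha)^{\mathcal{O}(\alpha)}$.
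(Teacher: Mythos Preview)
Your high-level outline matches the paper's: compute a per-color spanning forest to get an approximate solution $X$ with $|X|\le k\alpha$, bound the endpoints of $X$ and the leaves/branch vertices in each forest $G_i-X$ by $\OO(k\alpha)$, and then bound the remaining vertices $D$ (those with degree $0$ or $2$ in every $G_i$) via an interception-style argument. Where your proposal diverges from the paper, however, is precisely in that last step, and the divergence is a genuine gap.

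You assert that Reduction Rule~6 guarantees ``degree-2 paths are monochromatic, so their color-set is a single well-defined color.'' This is not what Rule~6 gives you. Rule~6 suppresses only vertices whose \emph{total} degree in $G$ is $2$; after it is applied exhaustively, every surviving vertex has total degree at least $3$, but such a vertex can still have degree exactly $2$ in several different $G_i$ simultaneously. Concretely, a vertex $v\in D$ may lie on a path $P_1\in\mathcal{P}_1$ in color~$1$ and on a completely different path $P_2\in\mathcal{P}_2$ in color~$2$, with different endpoints; the edges incident to $v$ need not share a single color. So the picture of ``parallel monochromatic paths between a fixed pair of special endpoints, classified by a $2^\alpha$ color profile'' does not describe the structure of $D$.

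Consequently your proposed counting---bounding, for each pair of special endpoints and each color profile, the number of parallel paths---does not bound $|D|$: a single path in one $\mathcal{P}_i$ can be arbitrarily long, and its internal vertices are distinguished only by which paths in the \emph{other} $\mathcal{P}_j$'s they also lie on. The paper's argument instead assigns to each $v\in D$ the tuple recording, for each $i\in[\alpha]$, which path of $\mathcal{P}_i$ (or none) contains $v$; since $|\mathcal{P}_i|\le 4k\alpha$, there are at most $(4k\alpha+1)^\alpha$ such signatures. If $|D|$ exceeds roughly three times this bound, three vertices share a signature, two of them are non-adjacent, and there is a bijection between their incident edges that preserves path membership. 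One of the two vertices can then be safely contracted out (its incident edges are, from the solution's point of view, interchangeable with those of the other vertex). This vertex-signature counting, not a parallel-path counting, is what yields the $(k\alpha)^{\mathcal{O}(\alpha)}$ bound.
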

\begin{proof}
Let $(G,k,{\sf col}: E(G) \rightarrow 2^{[\alpha]})$ be an instance of \simfes{} where none of the Reduction rules are applicable. For each graph $G_i$, we compute a spanning
forest, $F_i$, maximizing $|E(F_i)|$. Let $X_i = E(G_i) \setminus E(F_i)$. If $|X_i| > k$, the instance is a no-instance. Thus, we can assume that for each $i \in [\alpha]$, $X_i$
contains at most $k$ edges. Let $X=\bigcup_{i=1}^\alpha X_i$ denote the union of the sets $X_i$. Clearly, $|X| \leq k\alpha$. Let $U$ denote the subset of $V(G)$ that contains the
vertices incident to at least one edge in $X$. Since Reduction Rule 5 is not applicable, therefore $|U|\leq 2k\alpha$. Thus, the number of leaves in each $G_i - X$ is bounded by
$2k\alpha$. Accordingly, the number of vertices in each $G_i- X$ whose degree is at least 3 is bounded by $2k\alpha$. It remains to bound the number of vertices that are not
incident to any edge in $X$ and whose degree in each $G_i$ is $0$ or $2$ (their degree in $G$ is at least 3). Let $T$ be the set of vertices in $G$ which is either a leaf or a
degree $3$ vertex in some $G_i$, for $i \in [\alpha]$. Denote the set of vertices which are not in $T$, not incident to any edge in $X$ and whose degree in $G_i$ is $2$ by $D_i$.
Let ${\cal P}_i$ denote the set of paths in $G_i$, for $i \in [\alpha]$, whose internal vertices belong to $D_i$ and whose first and last vertices do not belong to $D_i$. Moreover,
let $D=\bigcup_{i=1}^{\alpha} D_i$ and ${\cal P} = \bigcup_{i=1}^{\alpha} {\cal P}_i$. Observe that for $i\in [\alpha]$, $|{\cal P}_{i}| \leq 4k \alpha$ and $|{\cal P}|\leq
4k\alpha^{2}$.

To obtain the desired kernel, it remains to show that $|D|=\OO((k\alpha)^{\mathcal{O}(\alpha)})$. For each edge $e\in E(G)$, let ${\cal P}[e]$ be the set of paths in ${\cal P}$ to
which $e$ belongs. Each edge belongs to at most one path in each ${\cal P}_i$, for any $i\in [\alpha]$. For each $v\in D$, by $E(v)$ we denote the set of edges incident to $v$ in
$G$. Observe that each vertex in $D$ is incident to at most $2\alpha$ edges. %Therefore, 
 For each vertex $v\in D$, there are at most $(4k\alpha+1)^{\alpha}$ options of choosing to which paths in ${\cal P}$ the vertex $v$ belongs. Note that here the extra additive one
is to include the case when a vertex does not belong to any path in a color class. %and there are at most $(\alpha)^{2\alpha}$ options of choosing which color-sets are assigned to
the edges incident to $v$. Thus, if $|D| \geq 3(4k\alpha+1)^{\alpha}$,
  Thus, there exists a constant $c$ such that if $|D|>(k\alpha)^{c\alpha}$, then $D$ contains (at least) three vertices, $r,s$ and $t$, such that for all $q,p\in\{r,s,t\}$, there
is a bijection $f: E(q)\rightarrow E(p)$ such that ${\cal P}[e]={\cal P}[f(e)]$ for all $e\in E(q)$. In particular, if $|D|>(k\alpha)^{c\alpha}$, then $D$ contains two non-adjacent
vertices, $v$ and $u$, such that there is a bijection $f: E(v)\rightarrow E(u)$ satisfying ${\cal P}[e]={\cal P}[f(e)]$ for all $e\in E(v)$. In this case, it is not necessary
insert any edge $e\in E(v)$ into a solution, since it has the same affect as inserting the edge $f(e)$. Thus, we can remove the vertex $v$, and for each two neighbors of $v$, $x$
and $y$, and for each color $i\in[\alpha]$ such that $i\in {\sf col}(\{v,x\})\cap {\sf col}(\{v,y\})$, we insert an edge $\{x,y\}$ whose color-set is $\{i\}$. After an exhaustive
application of this operation (as well as Reduction Rules 1--6), we obtain the desired bound on $|D|$, which concludes the proof of Theorem
\ref{thm:kernel}.
\end{proof}

\section{\maxsimfesfull}\label{sec:exactalgo-fes}
In this section we design an algorithm for \maxsimfesfull. Let $(G,q,\coln: E(G) \rightarrow 2^{[\alpha]})$ be an input to \maxsimfes. A set $F \subseteq E(G)$ such that  for all $i \in [\alpha]$, $G[F_i]$ is acyclic is called {\em simultaneous forest}. Here, $F_i=\{e\in F~|~i\in \coln(e)\}$, denotes the subset of edges of $F$ which has the integer $i$ in its image when the function $\coln$ is applied to it. 
We will solve \maxsimfes by reducing to an equivalent instance of the \matparity problem and then using the algorithm for the same. 

We start by giving a construction that reduces the \maxsimfes to \matparity. Let $(G,q,\coln: E(G) \rightarrow 2^{[\alpha]})$ be an input to \maxsimfes. Given, $(G,q,\coln: E(G) \rightarrow 2^{[\alpha]})$, for $i\in[\alpha]$, recall that by $G_i$ we denote the graph with the vertex set $V(G_i)=V(G)$ and the edge set $E(G_i)=\{ e^i \mid e\in E(G) \mbox{ and } i\in \coln(e)\}$. For each edge $e\in E(G)$, we will have its  distinct copy in $G_i$ if $i \in \coln(e)$. Thus, for each edge $e\in E(G)$, by  ${\sf Original}(e)$ we denote the set of edges $\{ e^j ~|  j\in \coln(e) \}$. On the other hand for each edge $e \in E(G)$, by  ${\sf Fake}(e)$ we denote the set of edges $\{ e^j ~|  j\in [\alpha]-\coln(e)\}$. Finally, for each edge $e\in E(G)$, by ${\sf Copies}(e)$ we denote the set ${\sf Original}(e) \cup {\sf Fake}(e)$. Let $M_{i}=(E_i,{\cal I}_i)$  denote the graphic matroid on $G_i$. That is, edges of $G_i$ forms the universe $E_i$ and ${\cal I}_i$ contains, $S\subseteq E(G_i)$ such that  $G_i[S]$ forms a forest. By Proposition~\ref{prop:graphicrep} we know that graphic matroids are representable over any field 
and given a graph $G$ one can find the corresponding representation matrix in time polynomial in $|V(G)|$. Let $A_{i}$ denote the linear representation  of $M_i$. That is, $A_i$ is
a matrix over ${\mathbb F}_2$,  where  the set of columns of $A_i$ are denoted by $E(G_i)$.  In particular, $A_i$ has dimension $d \times |E(G_i)|$, where $d=\mrank{M_i}$. A set $X
\subseteq E(G_i)$ is independent (that is $X\in {\cal I}_i$) if and only if the corresponding columns are linearly independent over ${\mathbb F}_2$.  Let ${\sf Fake}(G)$ denote the
set of edges in $\bigcup_{e\in E(G)}{\sf Fake}(e)$. Furthermore, let $\tau=|{\sf Fake}(G)|=\sum_{e\in E(G)} |{\sf Fake}(e)|$. Let $M_{\alpha+1}$ be the uniform matroid over ${\sf
Fake}(G)$ of rank $\tau$. That is, $E_{\alpha+1}={\sf Fake}(G)$  and $M_{\alpha+1}=U_{\tau,\tau}$. Let $I_{\tau}$ denote the  identity matrix of dimension $\tau \times \tau$.
Observe that, $A_{\alpha+1}=I_{\tau}$ denotes the linear representation of $M_{\alpha+1}$ over ${\mathbb F}_2$. Notice that $E_i\cap  E_j =\emptyset$ for all $1\leq i\neq j \leq
\alpha+1$. Let $M$ denote the direct sum of 
$M_1\oplus \cdots \oplus M_{\alpha+1}$ with its representation matrix being $A_M$.  

Now we are ready to define an instance of \matparity.  The ground set is the columns of $A_M$, which is indexed by edges in $\bigcup_{e\in E(G)} {\sf Copies}(e)$. Furthermore, the ground set is partitioned into ${\sf Copies}(e)$, $e\in E(G)$, which are called blocks. The main technical lemma of this section on which the whole algorithm is based is the following. 

% \begin{restatable}{lem}{maxsfes}
% %$[*]$ 
% \label{lem:maxsimforesttomatparity}
% Let $(G,q,\coln: E(G) \rightarrow 2^{[\alpha]})$ be an instance of \maxsimfes. Then $G$ has a simultaneous forest of size $q$ if and only if $(A_M, \biguplus_{e\in E(G)} {\sf Copies}(e),q)$ is a \yes\ instance of \matparity. Furthermore, given $(G,q,\coln: E(G) \rightarrow 2^{[\alpha]})$ we can obtain an instance $(A_M, \biguplus_{e\in E(G)} {\sf Copies}(e),q)$ in polynomial time. 
% \end{restatable}
\begin{lemma}
%$[*]$ 
\label{lem:maxsimforesttomatparity}
Let $(G,q,\coln: E(G) \rightarrow 2^{[\alpha]})$ be an instance of \maxsimfes. Then $G$ has a simultaneous forest of size $q$ if and only if $(A_M, \biguplus_{e\in E(G)} {\sf
Copies}(e),q)$ is a \yes\ instance of \matparity. Furthermore, given $(G,q,\coln: E(G) \rightarrow 2^{[\alpha]})$ we can obtain an instance $(A_M, \biguplus_{e\in E(G)} {\sf
Copies}(e),q)$ in polynomial time. 
\end{lemma}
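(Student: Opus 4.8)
The plan is to establish the biconditional by showing that simultaneous forests in $G$ correspond exactly to independent sets of $M$ that are unions of $q$ blocks, using the gadget structure of the reduction. The key observation driving everything is that each block ${\sf Copies}(e)$ has size exactly $\alpha$ (namely $|{\sf Original}(e)| + |{\sf Fake}(e)| = |\coln(e)| + (\alpha - |\coln(e)|) = \alpha$), and that in the direct sum $M = M_1 \oplus \cdots \oplus M_{\alpha+1}$, a set $S$ is independent if and only if its restriction $S \cap E_i$ is independent in each $M_i$. The fake copies are precisely what let every block contribute a full $\alpha$-tuple while only the original copies carry the real acyclicity constraints, since the fake elements live in the free uniform matroid $M_{\alpha+1} = U_{\tau,\tau}$, where \emph{every} subset is independent.

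For the forward direction I would take a simultaneous forest $F$ with $|F| = q$ and define $S = \biguplus_{e \in F} {\sf Copies}(e)$, which is immediately a union of $q$ blocks of total size $q\alpha$. To verify independence in $M$, I check each summand separately: for $i \in [\alpha]$, the restriction $S \cap E_i = \{e^i \mid e \in F,\ i \in \coln(e)\}$ corresponds under the graphic-matroid identification to the edge set $F_i = \{e \in F \mid i \in \coln(e)\}$ of $G_i$, which is acyclic by the hypothesis that $F$ is a simultaneous forest, hence independent in $M_i$. For the last summand, $S \cap E_{\alpha+1} = \biguplus_{e \in F} {\sf Fake}(e) \subseteq {\sf Fake}(G)$ is trivially independent in $U_{\tau,\tau}$ since every subset of its ground set is independent. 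Thus $S$ is independent in the direct sum and is a union of $q$ blocks.

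For the reverse direction I would start from an independent set $S$ of $M$ that is a union of $q$ blocks, say $S = \biguplus_{e \in F} {\sf Copies}(e)$ for some $F \subseteq E(G)$ with $|F| = q$. The crucial point is that because $S$ is a union of \emph{whole} blocks, for every $e \in F$ and every $i \in \coln(e)$ the original copy $e^i$ lies in $S$, so $S \cap E_i$ again corresponds exactly to the edges $F_i = \{e \in F \mid i \in \coln(e)\}$ of $G_i$. Independence of $S$ in $M$ forces $S \cap E_i$ to be independent in the graphic matroid $M_i$, which by definition means $G_i[F_i]$ is a forest, i.e.\ acyclic; as this holds for all $i \in [\alpha]$, the set $F$ is a simultaneous forest of size $q$. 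The fake copies impose no obstruction here, which is exactly why they were introduced: they pad each block out to size $\alpha$ so that the ``union of blocks'' condition of \matparity{} can be satisfied regardless of how many colors each original edge carries. Finally, the polynomial-time constructibility follows by assembling the representations: the graphic matroids $M_i$ are representable over ${\mathbb F}_2$ in polynomial time by Proposition~\ref{prop:graphicrep}, the identity matrix $I_\tau$ represents $U_{\tau,\tau}$ trivially, and the block-diagonal matrix $A_M$ representing the direct sum is formed in polynomial time as described in the preliminaries.

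The main obstacle, and the point deserving the most care, is ensuring that the correspondence between $S \cap E_i$ and the edge set $F_i$ of the subgraph $G_i[F_i]$ is exact in \emph{both} directions, in particular that no cycle in $G_i[F_i]$ is masked. This is precisely why one must argue that $S$ contains the \emph{entire} block of each chosen edge: if only partial blocks were allowed, an original copy $e^i$ could be absent from $S$ even though $e \in F$ and $i \in \coln(e)$, breaking the identification between column-independence in $M_i$ and acyclicity of $G_i[F_i]$. The \matparity{} framework guarantees we select full blocks, so this subtlety resolves cleanly, but it is the hinge on which the equivalence turns and I would state it explicitly rather than glossing over it.
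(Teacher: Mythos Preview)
Your proposal is correct and follows essentially the same approach as the paper: both directions use the direct-sum characterization of independence to reduce to checking that each $S\cap E_i$ is independent in the graphic matroid $M_i$ (equivalently, that $G_i[F_i]$ is acyclic), with the fake copies absorbed harmlessly by $U_{\tau,\tau}$. Your explicit remark on why full blocks are required in the reverse direction is a useful clarification that the paper leaves implicit.
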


\begin{proof}
 We first show the forward direction of the proof. Let $F$ be  a simultaneous forest of size $q$. Then we claim that the columns corresponding to $S=\bigcup_{e\in F}{\sf
Copies}(e)$ form an independent set in $M$ and furthermore, it is the union of $q$ blocks. That is, we need to show that the columns corresponding to $S=\bigcup_{e\in F}{\sf
Copies}(e)$ are linearly independent in $A_M$ over $\mathbb{F}_2$. By the definition of direct sum and its linear representation, it reduces to showing that $F$ is linearly
independent if and only if $F\cap E_i\in {\cal I}_i$ for all $i\leq \alpha+1$. Since $F$ is  a simultaneous forest of size $q$, we have that $G[F_i]$, $F_i=\{e\in F~|~i\in
\coln(e)\}$, is a forest. Hence, this implies that $F^i=\{e^i~|~e\in F_i\}$ forms a forest in $G_i$. This takes care of all the edges in $\cup_{e\in F} {\sf Original}(e)$. Now let
$S^*=S-\cup_{i\in [\alpha]} F^i =\cup_{e\in F} {\sf Fake}(e)=F^{\alpha+1}$. However, $S^*$ is a subset of $U_{\tau, \tau}$ and thus is an independent set since $|S^*|\leq \tau$.
This completes the proof of the forward direction. 

Now we show the reverse direction of the proof. Since, $(A_M, \biguplus_{e\in E(G)} {\sf Copies}(e),q)$ is a yes instance of \matparity, there exists  an independent set, say $S$,
that is the union of $q$ blocks. By construction $S$ corresponds to union of the sets ${\sf Copies}(e)$ for some $q$ edges in $G$. Let these edges be $F=\{e_1,\ldots,e_q\}$. We 
claim that $F$ is a simultaneous forest of size $q$. Towards this, we need to show that $G[F_i]$, where $F_i=\{e\in F~|~i\in \coln(e)\}$, is a forest. This happens if and only if
$F^i=\{e^i~|~e\in F_i\}$ forms a forest in $G_i$. However, we know that the columns corresponding to $F_i$ are linearly independent in $M_i$ and in particular in $A_i$ -- the
linear representation of graphic matroid of $G_i$. This shows that $F_i$ forms a forest in $G_i$ and hence $G[F_i]$ is a forest. This completes the equivalence proof. 

Finally, it easily follows from the discussion preceding the lemma that given $(G,q,\coln: E(G) \rightarrow 2^{[\alpha]})$ we can obtain an instance $(A_M, \biguplus_{e\in E(G)}
{\sf Copies}(e),q)$  in time polynomial  in $|V(G)|$. This completes the proof of the lemma.
\end{proof}

We will use the polynomial time reduction provided in Lemma~\ref{lem:maxsimforesttomatparity} to get the desired \FPT\ algorithm for \maxsimfes. Towards this will use the following \FPT{} result regarding \matparity for our \FPT{} as well as for an exact exponential time algorithm.  

Given an instance $(G,q,\coln: E(G) \rightarrow 2^{[\alpha]})$ of \maxsimfes\ we first apply Lemma~\ref{lem:maxsimforesttomatparity} and  obtain an instance $(A_M, \biguplus_{e\in E(G)} {\sf Copies}(e),q)$ of \matparity\ and then apply Proposition~\ref{alg:matparity} to obtain the following result. 

\begin{theorem}
\label{thm:maxsimfes-fpt}
\maxsimfes can be solved in time $\cO(2^{\omega q\alpha} |V(G)|^{\cO(1)})$. 
\end{theorem}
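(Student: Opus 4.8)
The plan is to combine the polynomial-time reduction of Lemma~\ref{lem:maxsimforesttomatparity} with the \matparity\ algorithm of Proposition~\ref{alg:matparity}. First I would invoke Lemma~\ref{lem:maxsimforesttomatparity} to transform, in time polynomial in $|V(G)|$, the given instance $(G,q,\coln: E(G) \rightarrow 2^{[\alpha]})$ of \maxsimfes\ into the equivalent instance $(A_M, \biguplus_{e\in E(G)} {\sf Copies}(e),q)$ of \matparity. Here $A_M$ is the representation matrix of the direct sum $M=M_1\oplus \cdots \oplus M_{\alpha+1}$ of the graphic matroids on the color graphs $G_i$ together with the uniform matroid $M_{\alpha+1}=U_{\tau,\tau}$ on the fake copies, and the blocks of the partition are exactly the sets ${\sf Copies}(e)$, each of size $\alpha$. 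By the lemma, $G$ has a simultaneous forest of size $q$ if and only if this \matparity\ instance is a \yes\ instance.

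Next I would run the deterministic \matparity\ algorithm from Proposition~\ref{alg:matparity} on $(A_M, \biguplus_{e\in E(G)} {\sf Copies}(e),q)$. Since every block ${\sf Copies}(e)$ has size exactly $\alpha$, the block-size parameter of the \matparity\ instance is $\alpha$, and we are seeking a union of $q$ blocks, so the ``number of blocks'' parameter is $q$. Proposition~\ref{alg:matparity} therefore solves this in time $\cO(2^{\omega q\alpha} \|A_M\|^{\cO(1)})$, where $\|A_M\|$ is the total bit-length of $A_M$. The remaining routine check is that $\|A_M\|$ is polynomial in $|V(G)|$: each $A_i$ is a graphic-matroid representation over $\mathbb{F}_2$ of dimension at most $n \times |E(G_i)|$, the uniform matroid representation $I_\tau$ has $\tau \le \alpha|E(G)| \le \alpha\binom{n}{2}$ columns, and the direct-sum matrix is block-diagonal; hence $A_M$ has polynomially many entries each of bounded bit-size, so $\|A_M\|^{\cO(1)} = |V(G)|^{\cO(1)}$.

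Combining the two steps, the total running time is the polynomial reduction time plus $\cO(2^{\omega q\alpha} |V(G)|^{\cO(1)})$, which is dominated by the latter, giving the claimed bound $\cO(2^{\omega q\alpha} |V(G)|^{\cO(1)})$. I do not expect any genuine obstacle here, since both ingredients are already established: the correctness of the reduction is exactly the content of Lemma~\ref{lem:maxsimforesttomatparity}, and the running-time guarantee is exactly Proposition~\ref{alg:matparity}. The only point needing a sentence of care is confirming that the block size of the constructed partition equals $\alpha$ (so that the exponent in Proposition~\ref{alg:matparity} is $\omega q \alpha$ rather than something larger), which follows immediately from the observation that $|{\sf Copies}(e)| = |{\sf Original}(e)| + |{\sf Fake}(e)| = |\coln(e)| + (\alpha - |\coln(e)|) = \alpha$ for every edge $e$.
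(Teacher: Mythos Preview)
Your proposal is correct and follows exactly the paper's approach: apply Lemma~\ref{lem:maxsimforesttomatparity} to reduce to \matparity, then invoke Proposition~\ref{alg:matparity}. You have simply made explicit two details the paper leaves implicit (that each block ${\sf Copies}(e)$ has size exactly $\alpha$, and that $\|A_M\|$ is polynomial in the input size); the only minor slip is the bound $|E(G)|\le\binom{n}{2}$, which need not hold for multigraphs, but this is harmless since $\|A_M\|$ is in any case polynomial in the size of the input instance.
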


Let $(G,q,\coln: E(G) \rightarrow 2^{[\alpha]})$ be an instance of \maxsimfes. Observe that $q$ is upper bounded by $\alpha (|V(G)|-1)$. Thus, as a corollary to Theorem~\ref{thm:maxsimfes-fpt} we get an exact algorithm for finding the largest sized simultaneous acyclic 
subgraph, running in time $\cO(2^{\omega n \alpha^2} |V(G)|^{\cO(1)})$.

\bibliographystyle{plain}
\bibliography{main_simfes}

%\newpage
%\appendix
%\input{appendix}
\end{document}